\newcommand{\SE}[1]{\mathbf{#1}}
\DeclareMathOperator{\mft}{\mathfrak{t}}
\DeclareMathOperator{\bmft}{\bar{\mathfrak{t}}}
\DeclareMathOperator{\mfs}{\mathfrak{s}}
\DeclareMathOperator{\bmfs}{\bar{\mathfrak{s}}}
\DeclareMathOperator{\mfa}{\mathfrak{a}}
\DeclareMathOperator{\bmfa}{\bar{\mathfrak{a}}}
\DeclareMathOperator{\bfa}{\mathbf{a}}
\DeclareMathOperator{\bfb}{\mathbf{b}}
\DeclareMathOperator{\mP}{\mathbb{P}}
\DeclareMathOperator{\RR}{\mathbb{R}}
\DeclareMathOperator{\Mod}{mod}
\DeclareMathOperator{\bit}{bit}
\DeclareMathOperator{\red}{red}
\DeclareMathOperator{\vk}{VK}
\DeclareMathOperator{\UI}{UI}
\DeclareMathOperator{\sx}{sx}
\newtheorem{theorem}{Theorem}[section]
\newtheorem{proposition}{Proposition}[section]
\newtheorem{lemma}{Lemma}[section]
\newtheorem{axiom}{Axiom}[section]
\begin{document}

\title{Introducing a differentiable measure of pointwise shared information}

\author{Abdullah Makkeh}
\email[]{abdullah.alimakkeh@uni-goettingen.de}
\author{Aaron J. Gutknecht}
\email[]{agutkne@uni-goettingen.de}
\altaffiliation[Also at ]{MEG Unit, Brain Imaging Center, Goethe University, Frankfurt, Germany}
\author{Michael Wibral}
\email[]{michael.wibral@uni-goettingen.de}
\affiliation{Campus Institute for Dynamics of Biological Networks, Georg-August Univeristy, Goettingen, Germany}

\date{\today}

\begin{abstract}
Partial information decomposition (PID) of the multivariate mutual information describes the distinct ways in which a set of source variables contains information about a target variable. The groundbreaking work of Williams and Beer has shown that this decomposition cannot be determined from classic information theory without making additional assumptions, and several candidate measures have been proposed, often drawing on principles from related fields such as decision theory. None of these measures is differentiable with respect to the underlying probability mass function. We here present a novel measure that satisfies this property, emerges solely from information-theoretic principles, and has the form of a local mutual information. We show how the measure can be understood from the perspective of exclusions of probability mass, a principle that is foundational to the original definition of the mutual information by Fano. Since our measure is well-defined for individual realizations of the random variables it lends itself for example to local learning in artificial neural networks. We also show that it has a meaningful Moebius inversion on a redundancy lattice and obeys a target chain rule. We give an operational interpretation of the measure based on the decisions that an agent should take if given only the shared information.
\end{abstract}

\keywords{pointwise information theory; mutual information; partial information decomposition; multivariate statistical dependency; synergy; redundancy; unique information; redundant information; neural networks}

\maketitle

\section{Introduction}
 What are the distinct ways in which a set of source variables may contain information about a target variable? How much information do input variables provide \textit{uniquely} about the output, such that this information about the output variable cannot be obtained by any other input variable, or collections thereof? How much information is provided in a \textit{shared} way, i.e., redundantly, by multiple input variables, or multiple collections of these? And how much information about the output is provided \textit{synergistically} such that it can only be obtained by considering many or all input variables together? Answering questions of this nature is the scope of partial information decomposition (PID). 
 
A solution to this problem has been long desired in studying  complex systems~\cite{brenner2000adaptive,latham2005synergy,margolin2006aracne} but seemed out of reach until the groundbreaking study of Williams and Beer~\cite{williams2010nonnegative}. This study provided first insights by establishing that information theory is lacking axioms to uniquely solve the PID problem. Such axioms have to be chosen in a way that satisfies our intuition about shared, unique, and synergistic information (at least in simple corner cases). However, further studies in~\cite{bertschinger2013shared,harder2013bivariate} quickly revealed that not all intuitively desirable properties, like positivity, zero redundant information for statistically independent input, a chain rule for composite output variables, etc., were compatible, and the initial measure proposed by Williams and Beer was rejected on the grounds of not fulfilling certain desiderata favored in the community. Nevertheless, the work of Williams and Beer clarified that indeed an axiomatic approach is necessary and also highlighted the possibility that the higher order terms (or questions) that arose when considering more than two input variables could be elegantly organized into contributions on the lattice of antichains (see more below). Approaches that do not fulfill the Williams and Beer desiderata have been suggested, e.g., \cite{quax2017quantifying,perrone2016hierarchical}. However, these approaches fail to quantify all the desired quantities and, therefore, answer a question different from that posed by PID.

Subsequently, multiple PID frameworks have been proposed, and each of them has merits in the application case indicated by its operational interpretation (Bertschinger et al.~\cite{bertschinger2014quantifying}, e.g., justify their measure of unique information in a decision-theoretic setting).  However, all measures lacked the property of being well defined on individual realizations of inputs and outputs (localizability), as well as continuity and differentiability in the underlying joint probability distribution.  These properties are key desiderata for the settings of interest to neuroscientists and physicists, e.g.,  for distributed computation, where locality is needed to unfold computations in space and time ~\cite{lizier2012localstorage,schreiber2000measuring, wibral2013measuring,lizier2008localtransfer}; for learning in neural networks~\cite{wibral2017partialneuralgoal,kay2011coherent} where differentiability is needed for gradient descent and localizability for learning from single samples and minibatches; for neural coding~\cite{wibral2015bitsbrains,kay2011coherent} where localizability is important to evaluate the information value of individual inputs that are encoded by a system; and for problems from the domain of complex systems in physics as discussed in~\cite{deco2012information}.

While the first two properties have very recently been provided by the pointwise partial information decomposition (PPID) of Finn and Lizier~\cite{finn2018pointwise}, differentiability is still missing, as is the extension of most measures to continuous variables. Differentiability, however, seems pivotal to exploit PID measures for learning in neural networks -- as suggested for example in~\cite{wibral2017partialneuralgoal}, and also in physics problems.

Therefore, we here rework the definition of Finn and Lizier~\cite{finn2018pointwise} in order to define a novel PID measure of shared mutual information that is localizable and also differentiable. We aim for a measure that adheres as closely as possible to the original definition of (local) mutual information -- in the hope that our measure will inherit most of the operational interpretation of local mutual information. We also seek to avoid invoking assumptions or desiderata from outside the scope of information theory, e.g., we explicitly seek to avoid invoking desiderata from decision or game theory. We note that adhering as closely as possible to information-theoretic concepts should also simplify finding localizable and differentiable measures.

Our goals above suggest that we have to abandon positivity for the parts (called atoms in ~\cite{williams2010nonnegative}) of the decomposition, simply because the local mutual information can be already negative \footnote{This can be seen as follows: Assuming that the negative local MI consists only of shared information, then this local shared information must be negative, enforcing the existence of negative local shared information. Now assuming that this shared information does not differ from realization to realization -- something we should consider possible at this point -- while the other contributions vary, then this leads to a shared information that is also negative on average, also see \cite{finn2018pointwise}} With respect to a negative shared information in the PID we aim to preserve the interpretation of negative terms as being misinformative, in the sense that obtaining negative information will make a rational agent more likely to make the wrong prediction about the value of a target variable. Our goals also strongly suggest to avoid computing the minimum (or maximum) of multiple information expressions anywhere in the definition of the measure. This is because taking a minimum or maximum would almost certainly collide with differentiability and also a later extension to continuous variables.

The paper proceeds as follows. First, Section~\ref{sec:Logic}, introduces our measure of shared information $i_\cap^{\sx}$. Then, section~\ref{sec:prob_exc_shrd_info} lays out how $i_\cap^{\sx}$ can be understood based on the concept of shared probability mass exclusions. Section~\ref{sec:lattice} utilizes $i_\cap^{\sx}$ to obtain a full PID and establishes its differentiability. Then, Section~\ref{sec:discussion} discusses some implications of $i_\cap^{\sx}$ being a local mutual information, its operational interpretation, and some key applications of $i_\cap^{\sx}$. Finally, Section~\ref{sec:examples} concludes by several examples. 

\section{\label{sec:Logic} Definition of the measure $i_\cap^{sx}$ of pointwise shared information}
We begin by considering discrete random variables $S_1,\ldots,S_n$ and $T$ where the $S_i$ are called the sources and $T$ is the target. Suppose now that these random variables have taken on particular realizations $s_1,\ldots,s_n$ and $t$. Our goal is to quantify the \textit{pointwise} shared information that the source realizations carry about the target realization. We will proceed in three steps: (1) we define the information shared by \textit{all} source realizations about the target realization, (2) we define pointwise shared information for any \textit{subset} of source realizations, and (3) we provide the complete definition of the information shared by \textit{multiple subsets} of source realizations. 

So how much information about the target realization $t$ is redundantly contained in all source realizations $s_i$? We propose that this information can be quantified as the information about the target realization provided by the truth of the statement 
\begin{equation}
\mathcal{W}_{s_1,\ldots,s_n} = \big((S_1=s_1) \vee \ldots \vee  (S_n=s_n)\big)
\end{equation}
i.e., by the inclusive OR of the statements that each source variable has taken on its specific realization. This information in turn can be understood as a regular pointwise mutual information between the target realization $t$ and the indicator random variable~\footnote{Note that the idea of using an auxiliary random variable ($\mathbf{I}_{\mathcal W}$ in our case) is not novel per se. Quax et al.~\cite{quax2017quantifying} has defined synergy using auxiliary random variable. However, their auxiliary random variable is conceptually different from $\mathbf{I}_{\mathcal W}$ and their approach yielded a `stand-alone' measure of synergistic information without providing any decomposition.} of the statement $\mathcal{W}_{s_1,\ldots,s_n} $ assuming the value 1:
\begin{align}\label{eq:logic_shr}
    i_\cap^{\sx}(t: s_1; \ldots;s_n) 
    &:=\log_2\frac{p(t\mid\mathbf{I}_{\mathcal{W}_{s_1,\ldots,s_n}}=1)}{p(t)}\\
    &=\log_2\frac{p(t\mid\mathcal{W}_{s_1,\ldots,s_n}=\text{true})}{p(t)}.
\end{align}
The superscript ``sx'' stands for ``shared exclusion'' and will be explained in more detail in the next section. The reason for the choice of $\mathcal{W}_{s_1,\ldots,s_n}$ is the following: the truth of this statement can be verified by knowing the realization of \textit{any} single source variable, i.e., knowing that $S_i=s_i$ for at least one $i$. Thus, whatever information can be obtained from $\mathcal{W}_{s_1,\ldots,s_n}$ can also be obtained from any individual statement $S_i=s_i$. In other words, the statement $\mathcal{W}_{s_1,\ldots,s_n}$ \textit{only} contains information that is redundant to all source realizations. Conversely, whatever information can be obtained from all individual statements $S_i = s_i$ can also be obtained from $\mathcal{W}_{s_1,\ldots,s_n}$ because it implies that at least one of the statements $S_i = s_i$ has to be true. In other words, \textit{all} of the information shared by the source realizations is contained in the statement $\mathcal{W}_{s_1,\ldots,s_n}$. Accordingly, the statement $\mathcal{W}_{s_1,\ldots,s_n}$ exactly captures the information redundantly contained in the source realizations. Any logically stronger or weaker statement would either contain some nonredundant information or miss out on some redundant information respectively. For a more comprehensive and foundational version of this argument, connecting principles from mereology (the study of parthood relations) and formal logic,  see \cite{gutknecht2020bits}.

Now, this definition is not entirely complete yet since it only quantifies the information shared by \textit{all} source realizations $s_1,\ldots,s_n$. However, a full-fledged measure of shared information also has to specify the information shared by (1) any \textit{subset} of source realizations (e.g.,the information shared by $s_1$ and $s_3$) and (2) \textit{multiple subsets} of source realizations (e.g., the information shared by $(s_1,s_2)$ and $(s_2,s_3)$) \cite{williams2010nonnegative}. The definition for a subset $\bfa \subseteq \{1,\ldots,n\}$ is straightforward: the information shared by the corresponding realizations $(s_i \mid i \in \bfa)$ is the information provided by the statement
\begin{equation}
\mathcal{W}_{\bfa} = \left(\bigvee_{i \in \bfa} S_i = s_i\right)
\end{equation}
i.e., by the logical OR of statements $S_i = s_i$ \textit{where $i$ is in the subset in question}. Note that in the following we will refer to sets of source realizations \textit{by their index sets} for brevity. So we will generally say ``the set of source realizations $\bfa$'' instead of ``the source realizations  $(s_i \mid i \in \bfa)$''. There are formal reasons why it is preferable to work with index sets that will become apparent in Section \ref{sec:lattice}.

Now, how about the case of multiple subsets? Note first that the pointwise mutual information provided by a given subset $\bfa$ of source realizations about the target realization is the information provided by the logical AND of the corresponding statements $S_i = s_i$:
\begin{equation}
i\left(t: (s_i)_{i \in \bfa} \right) = \log_2\frac{p(t\mid (\bigwedge_{i\in \bfa} S_i = s_i ) =\text{true})}{p(t)}.
\end{equation}
Accordingly, the information shared by multiple subsets of source realizations $\bfa_1,\ldots,\bfa_m$ can be quantified as the information provided by the logical OR of the associated logical AND statements, i.e., as the information provided by the statement
\begin{equation}
\mathcal{W}_{\bfa_1,\ldots,\bfa_m} = \left(\bigvee_{i=1}^{m} \bigwedge_{j\in \bfa_i} S_j = s_j \right).
\end{equation}
The underlying reasoning is exactly as described above: whatever information can be obtained from the $\mathcal{W}_{\bfa_1,\ldots,\bfa_m}$ can also be obtained from all of the conjunctions $\bigwedge_{j\in \bfa_i} S_j = s_j$ because as soon as the truth of one of the conjunctions is known the truth of $\mathcal{W}_{\bfa_1,\ldots,\bfa_m}$ is known as well. Conversely, whatever information can be obtained from all conjunctions can also be obtained from $\mathcal{W}_{\bfa_1,\ldots,\bfa_m}$ since this statement implies that at least one conjunction must be true. This leads us to the final definition of the information shared by arbitrary subsets of source realizations $\bfa_1,\ldots,\bfa_m$:
\begin{align}
i_\cap^{\sx}(t: \bfa_1; \ldots;\bfa_m) 
    &:=\log_2\frac{p(t\mid\mathbf{I}_{\mathcal{W}_{\bfa_1,\ldots,\bfa_n}}=1)}{p(t)}\\
    &=\log_2\frac{p(t\mid\mathcal{W}_{\bfa_1,\ldots,\bfa_n}=\text{true})}{p(t)}.
\end{align}
Note that this general definition agrees with the above definition of the information shared by all source realizations or subsets thereof. We would also like to emphasize here again that $i_\cap^{\sx}$ has the form of a local mutual information. This feature is of particular importance in the following section where we aim to provide further intuition for the measure by showing that it can also be motivated from the perspective of probability mass exclusions as discussed in \cite{finn2018probability}. 

\section{\label{sec:prob_exc_shrd_info}Shared mutual information from shared exclusions of probability mass}
Shannon information can be seen as being induced by exclusion of probability mass (e.g, \cite[Sec.~2.1.3]{wibral2015bitsbrains}), and the same perspective can actually be applied to the mutual information as well -- as explicitly derived by Finn and Lizier~\cite{finn2018probability}. In our approach to shared information, we suggest to keep intact this central information-theoretic principle that binds the exclusion of probability mass to information and mutual information. We now first review the probability exclusion perspective on local mutual information. Subsequently, we show how the measure $i_\cap^{\sx}$ of shared information, itself being a local mutual information, can be motivated from the same perspective as well.

\subsection{\label{subsec:mi_exclusion}Mutual information from exclusions of probability mass}
The local mutual information~\cite{fano1961transmission} obtained from a realization $(t,s)$ of two random variables $T$ and $S$ is
\begin{equation}\label{eq:localmi}
    i(t:s)=\log_2\frac{p(t\mid s)}{p(t)}.
\end{equation}
\noindent 
This means that $i(t:s)$ compares the probability of observing $t$ after observing $s$ to the prior $p(t).$ Thus, $s$ is said to be \emph{informative} (resp.~\emph{misinformative}) about $t$ if the chance of $t$ occurring increases (resp.~decreases) after observing $s$ compared to the prior probability $p(t)$, i.e., if $i(t: s)>0$ (resp.~$i(t:s)<0$).

The definition of $i(t:s)$ can be understood in terms of excluding certain probability mass \cite{finn2018probability} by rewriting it as

\begin{equation}\label{eq:lmi_sets}
    i(t,s)=\log_2\frac{\mP( \mathfrak{t})-\mP( \mathfrak{t}\cap \mathfrak{\bar s})}{1-\mP( \mathfrak{\bar s})} - \log_2{\mP( \mathfrak{t}})~,
\end{equation}
\noindent where $\bar{\mathfrak{s}}$ is the set complement of the event $\mathfrak{s} = \{S=s\}$ and $\mathfrak{t} = \{T=t\}$. Looking at it in this way, pointwise mutual information can be conceptualized as follows (illustrated in FIG.~\ref{fig:remove_rescale}): (i) ``removing'' all points from the initial sample space $\Omega$ that are incompatible with the observation of a specific $s$ by giving them measure zero--for the event $\mathfrak{t}$ this has the consequence that a part of it is also removed, i.e., $\mP( \mathfrak{t})-\mP( \mathfrak{t}\cap \mathfrak{\bar s})$; (ii) \emph{rescaling} the probability measure to again have properly normalized probabilities, i.e., dividing by $1-\mP(\mathfrak{\bar s})$; and (iii) \emph{comparing} the size of $\mft$ after observing $s$ to the prior $\mP(\mft)$ on a logarithmic scale. The remove-rescale procedure is a conceptual way of thinking about the changes to $\Omega$ (after observing $s$) that are reflected in the conditional measure $\mP(\cdot\mid\mfs)$.

This derivation of local mutual information can be generalized to any number of sources. For instance, the joint local mutual information of $s_1,s_2$ about $t$ is

\begin{equation} \label{eq:lmi_n_var}
    i(t:s_1,s_2)=\log_2\frac{\mP( \mathfrak{t})-\mP( \mathfrak{t}\cap(\bar{\mfs}_1\cup\bar{\mfs}_2)) }{1-\mP( \bar{\mfs}_1\cup\bar{\mfs}_2)} - \log_2{\mP( \mathfrak{t}}).
\end{equation}
The two conserved key principles here are that (i) the mutual information is always induced by exclusion of the probability mass related to events that are impossible after the observation of $s_1,\ldots, s_n$, i.e., $\bar{\mathfrak{s}}_1, \ldots, \bar{\mathfrak{s}}_n$, and (ii) the probabilities are rescaled by taking into account these very same exclusions. These core information-theoretic principles can be utilized to motivate the measure $i_\cap^{\sx}$ as explained in the next section.
\begin{figure}
    \centering
    \includegraphics[width=0.4\textwidth, angle=0]{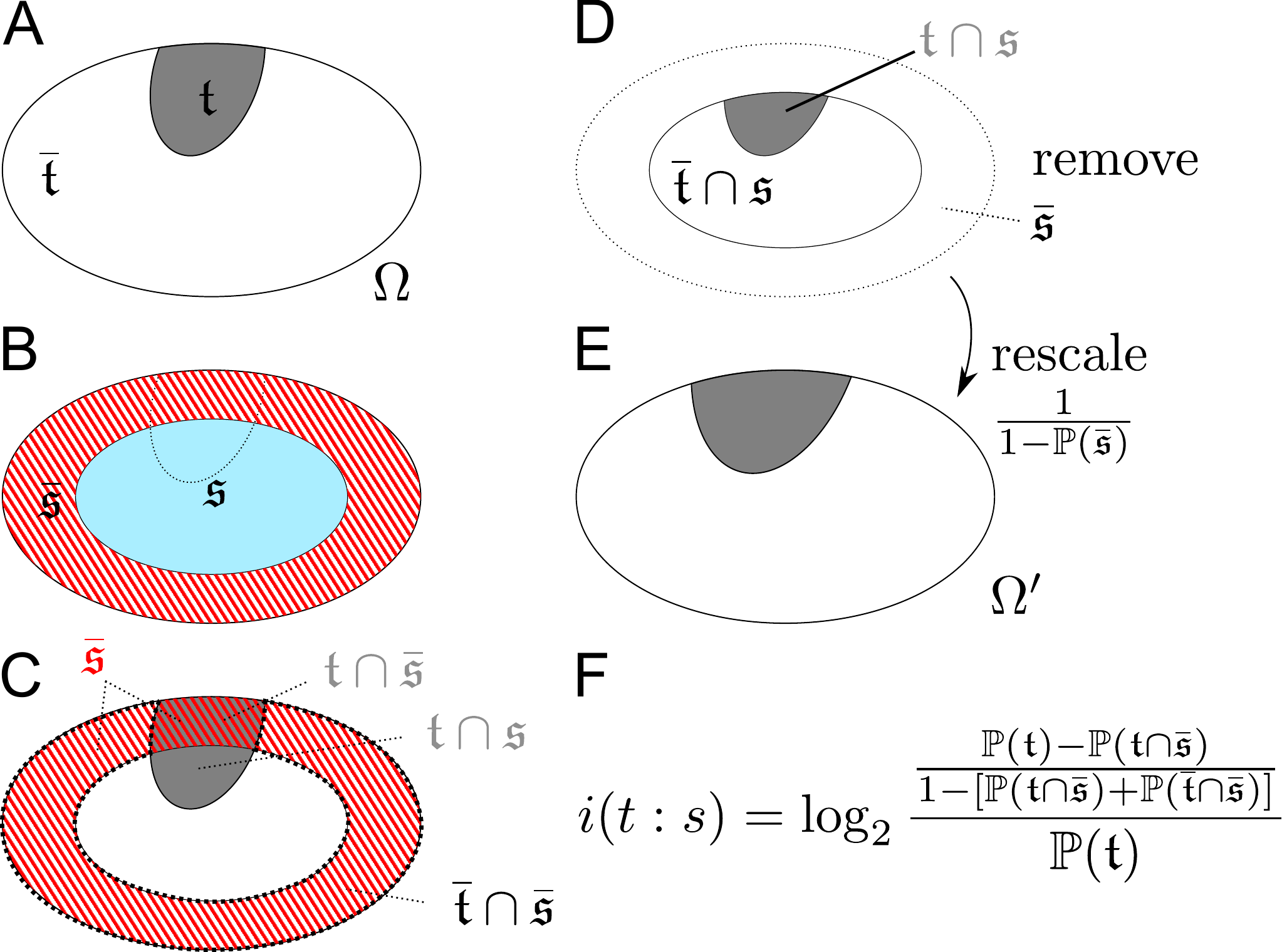}
    \caption{\label{fig:remove_rescale}\textbf{Depiction of deriving the local mutual information $i(t:s)$ by excluding the probability mass of the impossible event $\bmfs$ after observing $\mfs$.} (A) Two events $\mft, \bmft$ partition the sample space $\Omega$. (B) Two event partition $\mfs, \bmfs$ of the source variable $S$ in the sample space $\Omega$.  The occurrence of $\mfs$ renders $\bmfs$ impossible (red (dark gray) stripes). (C) $\mft$ may intersect with $\mfs$ (gray region) and $\bmfs$ (red (dark gray) hashed region). The relative size of the two intersections determines whether we obtain information or misinformation, i.e.~whether $\mft$ becomes relatively more likely after considering $\mfs$, or not (D), considering the necessary rescaling of the probability measure (E). Note that if the gray region in (E) is larger (resp.~smaller) than that in (A), then $s$ is informative (resp.~misinformative) about $t$ since observing $s$ hints that $t$ is more (reps.~less) likely to occur compared to an ignorant prior. (F) shows why the misinformative exclusion $\mP(\mft\cap\bmfs)$ (intersection of red (dark gray) hashes with gray region) cannot be cleanly separated from the informative exclusion, $\mP(\bmft\cap \bmfs)$ (dotted outline in (C)), as stated already in~\cite{finn2018probability}. This is because these overlaps appear together in a sum inside the logarithm, but this logarithm in turn guarantees the additivity of information terms. Thus the additivity of (mutual) information terms is incompatible with an additive separation of informative and misinformative exclusions \emph{inside} the logarithms of the information measures.}
\end{figure}

\subsection[]{\label{subsec:def_isx}$i^{\sx}_{\cap}$ from shared exclusions of probability mass}
The core idea is now that just as mutual information is connected to the exclusion of probability mass, \textit{shared} information should be connected to \textit{shared} exclusions of probability mass, i.e., to possibilities being excluded redundantly by all (joint) source realizations in question. Now, what is excluded by a given joint source realization $\bfa_j$ is precisely the complement of the event $\mathfrak{a}_j = \bigcap_{i \in \bfa_j} \{S_i = s_i\}$. Thus, to evaluate the information shared by the joint source realizations $\bfa_1,\ldots, \bfa_m$, we need to remove and rescale by the intersection of the complement events $\bar{\mathfrak{a}}_j$. This intersection contains points that are excluded by all joint source realizations in question. Hence, we arrive at
\begin{equation}
\small
\begin{split}\label{eq:isx}
i^{\sx}_{\cap}(t:\bfa_1;\bfa_2;\ldots; \bfa_n)   &:=\log_2\frac{\mP( \mathfrak{t})-\mP( \mft\cap(\bmfa_1 \cap \bmfa_2\cap \ldots \cap \bmfa_n) )}{1-\mP(\bmfa_1 \cap \bmfa_2\cap \ldots \cap \bmfa_n) }\\ 
                                        &- \log_2 \mP(\mathfrak{t}).
\end{split}
\end{equation}

\begin{figure}
    \centering
    \includegraphics[width=0.39\textwidth]{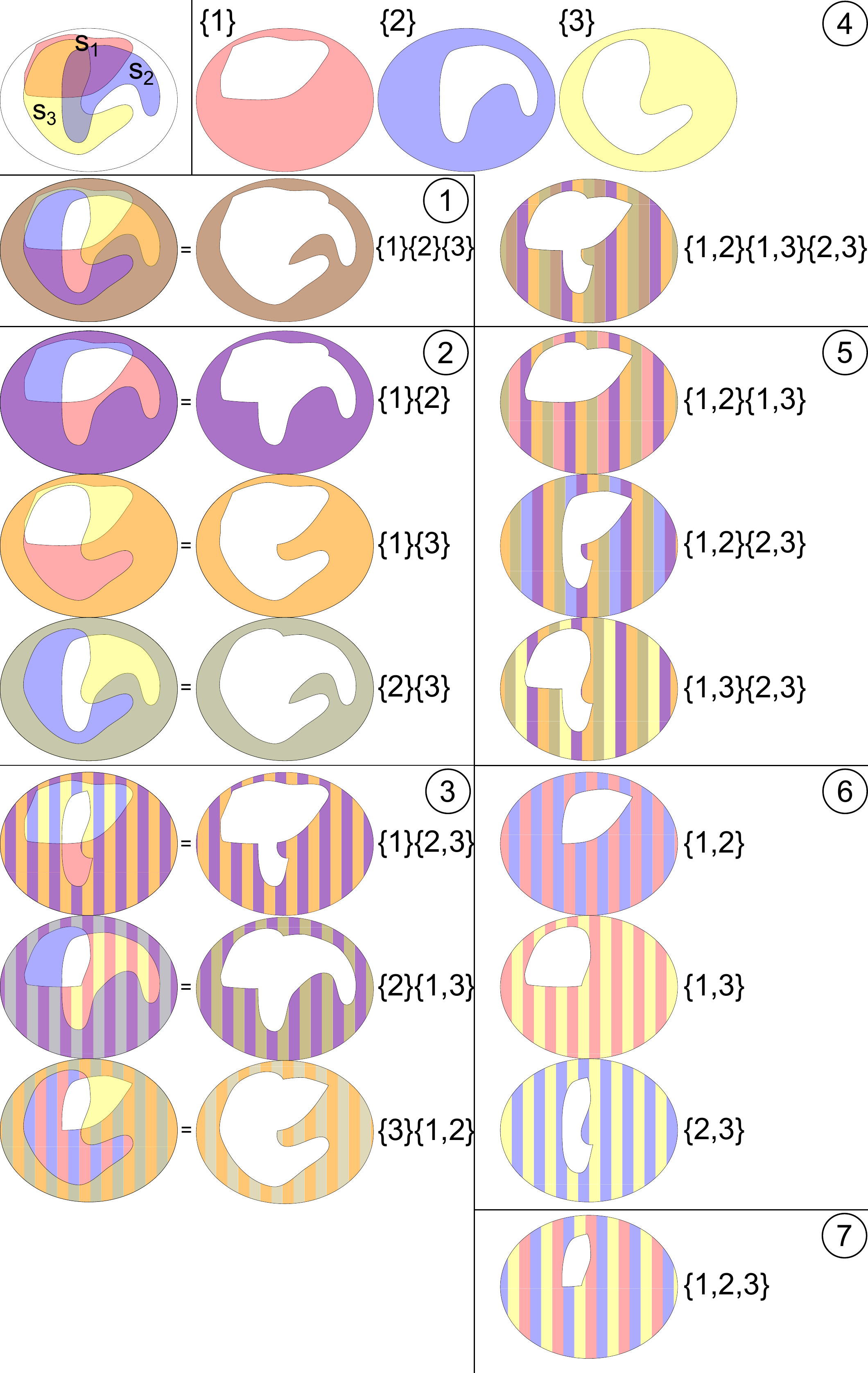}
    \caption{\label{fig:Exclusion_lattice}\textbf{Shared exclusions  in the three-source variable case.} \emph{Upper left:} A sample space with three events $\mfs_1$, $\mfs_2$, $\mfs_3$ from three source variables (their complements events are depicted in (4)). For clarity, $\mft$ is not shown, but may arbitrarily intersect with any intersections/unions of $\mfs_i$. The remaining panels show the induced exclusions by different combinations of $\bfa_i$. These exclusions arise by taking the corresponding unions and intersections of sets. Which unions and intersections were taken can be deduced by the shapes of the remaining, nonexcluded regions. For (1)-(3) we show the shared exclusions for combination of singletons ((1) and (2)) and those of singletons and coalitions, such as the events of the collections (\emph{left}) and the shared exclusions (\emph{right}). For (4)-(7) we only show shared exclusions.
    The online version uses the additional, nonessential color-based mark-up of unions and intersections:  An \emph{intersection exclusion} is indicated by the \emph{mix} of the individual colors, e.g., the $\{1\}\{2\}$ exclusion is $\bmfs_1\cap\bmfs_2$ and mixes red and blue to purple, and a \emph{union exclusion} is indicated by a \emph{pattern} of the individual colors, e.g., the $\{1,2\}$ exclusion is $\bmfs_1\cup\bmfs_2$ and takes a red-blue pattern.}
\end{figure}

It is straightforward to show that this definition coincides with the one given in Section \ref{sec:Logic}. FIG~\ref{fig:Exclusion_lattice} depicts all possible exclusions in the case of three sources. This concludes our exposition of the measure of shared information $i_\cap^{\sx}$. In the next section, we show how this measure induces a meaningful and differentiable partial information decomposition.

\section{\label{sec:lattice}Lattice structure and Differentiability}
We now present a lattice structure that yields a pointwise partial information decomposition (PPID) when endowed with $i^{\sx}_\cap$ and show that all of the resulting PPID terms are differentiable. The lattice structure was originally introduced by Williams and Beer \cite{williams2010nonnegative} on the basis of a range of axioms they placed on the concept of redundant information (see below). As we showed in \cite{gutknecht2020bits} it can also be derived from elementary parthood relationships between the PID terms (also called PID atoms) and mutual information terms. 

\subsection{Lattice structure} 
Williams and Beer in their seminal work~\cite{williams2010nonnegative} showed that in order to capture all the information contributions that a set of sources has about a target, we need to look at the level of \emph{collections of sources}. That is, each combination of collections of sources captures a PPID term (an information contribution / information atom). Their argument was based on an analysis of the concept of redundant information, i.e., the information shared by multiple collections of sources. In particular, they argued that any measure of shared information should satisfy certain desiderata, referred to as W\&B axioms (see Axioms~\ref{ax:sym},~\ref{ax:mono}, and~\ref{ax:self}). These axioms imply that the domain of the shared information function can be restricted to the \emph{antichain combinations}, i.e., any combination of collections of sources such that none of the collections is a subset of another. The reason is the following: consider collections $\bfa$, $\mathbf{b}$, and $\mathbf{c}$, and suppose that $\mathbf{a}\subset \mathbf{b}$ (while $\bfa\not\subset \mathbf{c}$ and $\mathbf{c}\not\subset \bfa$). Then the information shared by all three collections is simply that shared by $\bfa$ and $\mathbf{c}$ since any information in $\bfa$ is automatically also contained in $\mathbf{b}$. In this way the information shared by multiple collections always reduces to the information associated with an antichain combination by removing all supersets. The measure $i_\cap
^{\sx}$ agrees with this result because the truth conditions of the statement $\mathcal{W}_{\bfa_1,\ldots,\bfa_m}$ are unaffected by superset removal.

Mathematically, the antichain-combinations form a lattice structure, i.e., there exists an ordering $\preceq$ of these antichain combinations such that for any pair of antichain combinations there is a unique infimum and supremum. In~\cite{williams2010nonnegative}, this lattice of antichain combinations is called the \emph{redundancy lattice} since it models inclusion of redundancies: redundant information terms associated with lower level antichains are included in redundancies associated with higher level antichains. Williams and Beer then introduced the PID terms implicitly via a M\"{o}bius Inversion over the lattice (more details in Appendix~\ref{apx:subsec:inf-misinf}). We can proceed in just the same way on a pointwise level and introduce the PPID terms via a M\"{o}bius-Inversion of $i_\cap^{\sx}$, i.e., via inverting the relationship
\begin{equation}
i_\cap^{\sx}(t:\alpha) = \sum_{\beta \preceq \alpha} \pi^{\sx}(t:\beta)
\end{equation}
where $\alpha$ and $\beta$ are antichain combinations. In this way each PPID term $\pi^{\sx}$ measures the information ``increment''  as we move up the lattice, i.e., the PPID term of a given node is that part of the corresponding shared information that is not already contained in any lower level shared information.

It should be mentioned at this point that the measure $i^{\sx}_{\cap}$ actually violates one of the W\&B axioms for shared information: it is not monotonically decreasing as more collections of source realizations are included. On first sight this appears to be a problem because one would expect, for instance, that the information shared by source realizations $s_1,s_2$ and $s_3$ should be \textit{smaller than or equal to} the information shared by $s_1$ and $s_2$. After all, the information shared by all three source realizations should be contained in the information shared by the first two. However, the violation of the monotonicity property has a natural interpretation in terms of informative and misinformative contributions to redundant information~\cite{finn2018pointwise}: whereas each of these components individually \textit{should} indeed satisfy the monotonicity axiom, this is not true of the total redundant information. Using the above example, the information shared by  $s_1,s_2,$ and $s_3$ can actually be larger than the information shared by $s_1$ and $s_2$ if the extra information in the latter shared information term (i.e., the information shared by $s_1$ and $s_2$ but \textit{not} by $s_3$) is misinformative.

As shown in \cite{finn2018probability} it is possible to uniquely decompose the pointwise mutual information into an informative and a misinformative component. Since $i^{\sx}_{\cap}$  is itself a pointwise mutual information the same decomposition can be applied in order to obtain an informative pointwise shared information $i^{\sx+}_{\cap}$ ~\eqref{eq:isx_decomposed-plus} and a misinformative pointwise shared information $i^{\sx-}_{\cap}$ ~\eqref{eq:isx_decomposed-minus}. We may then show that each of these components individually satisfies the W\&B axioms. The decomposition reads
\begin{subequations}\label{eq:isx_decomposed}
\small
\begin{align}
i^{\sx}_{\cap}(t:\bfa_1; \bfa_2;\ldots; \bfa_m) 	&=i_\cap^{\sx+}(t:\bfa_1; \bfa_2;\ldots; \bfa_m)\notag\\ 
                                                				&- i_\cap^{\sx-}(t:\bfa_1; \bfa_2;\ldots; \bfa_m) \label{eq:isx_decomposed-rlz},
\end{align}
\end{subequations}
\begin{subequations}
\small
    \begin{align}
        i_\cap^{\sx+}(t:\bfa_1; \bfa_2;\ldots; \bfa_m)  &:= \log_2\frac{1}{\mP( \mfa_1 \cup \mfa_2 \cup \ldots \cup \mfa_m) },\label{eq:isx_decomposed-plus}\\ 
        i_\cap^{\sx-}(t:\bfa_1; \bfa_2;\ldots; \bfa_m)  &:=\log_2\frac{\mP(\mft)}{\mP(\mft \cap (\mfa_1 \cup \mfa_2 \cup \ldots \cup \mfa_m)) }.\label{eq:isx_decomposed-minus}
    \end{align}
\end{subequations}

\noindent Here, the first term of~\eqref{eq:isx_decomposed-rlz} is considered to be the informative part as it is what can be inferred from the sources (recall that $\SE{a}_i$ are indices of collections of sources) and we refer to it by $i_\cap^{\sx+}$~\eqref{eq:isx_decomposed-plus}. The second term of~\eqref{eq:isx_decomposed-rlz} quantifies the (misinformative) relative loss of $p(t)$, the probability mass of the event $\mathfrak{t}$ (which actually happened) when excluding the mass of $\mathfrak{\bar a}_1\cap\mathfrak{\bar a}_2\cap\ldots\cap \mathfrak{\bar a}_n$ and we refer to it by $i_\cap^{\sx-}$~\eqref{eq:isx_decomposed-minus}.

Now, $i_\cap^{\sx\pm}$ should individually fulfill a pointwise version of the Williams and Beer axioms. These PPID axioms were described by Finn and Lizier~\cite{finn2018pointwise}.

\begin{axiom}[Symmetry]\label{ax:sym}
    $i^{+}_{\cap}$ and $i^{-}_{\cap}$ are invariant under any permutation $\sigma$ of collections of source events:
    \begin{align*}
        i^{+}_{\cap}(t:\bfa_1; \bfa_2; \ldots; \bfa_m) &= i^{+}_{\cap}(t:\sigma(\bfa_1); \sigma(\bfa_2); \ldots; \sigma(\bfa_m)),\\
        i^{-}_{\cap}(t:\bfa_1; \bfa_2; \ldots; \bfa_m) &= i^{-}_{\cap}(t:\sigma(\bfa_1); \sigma(\bfa_2); \ldots; \sigma(\bfa_m)).
    \end{align*}
\end{axiom}

\begin{axiom}[Monotonicity]\label{ax:mono}
    $i^{+}_{\cap}$ and $i^{-}_{\cap}$ decreases monotonically as more source events are included,
    \begin{align*}
        i^{+}_{\cap}(t:\bfa_1; \ldots; \bfa_m; \bfa_{m+1}) &\le i^{+}_{\cap}(t:\bfa_1;\ldots; \bfa_m),\\
        i^{-}_{\cap}(t:\bfa_1; \ldots; \bfa_m; \bfa_{m+1}) &\le i^{-}_{\cap}(t:\bfa_1; \ldots; \bfa_m),
    \end{align*}
    with equality if there exists $i\in[m]$ such that $\bfa_i\subseteq\bfa_{m+1}.$ 
\end{axiom}

\begin{axiom}[Self-redundancy]\label{ax:self}
    $i^{+}_{\cap}$ and $i^{-}_{\cap}$ for a single source event $\bfa$ equal $i^+$ and $i^-,$ respectively:
    \begin{align*}
        \small
        i^{+}_{\cap}(t:\bfa) &= h(\bfa) = i^+(t:\bfa),\\
        i^{-}_{\cap}(t:\bfa) &= h(\bfa\mid t)= i^-(t:\bfa).
    \end{align*}
    Therefore, $i_\cap(t:\bfa)=i(t:\bfa).$
\end{axiom}
Note that $i(t:\bfa) = i^+(t;\bfa ) - i^-(t;\bfa),$ which is the informative--misinformative decomposition of the pointwise mutual information derived by Finn and Lizier~\cite{finn2018probability}. The following theorem states that $i_\cap^{\sx\pm}$ result in a consistent PPID by showing that $i_\cap^{\sx+}$ and $i_\cap^{\sx-}$ individually fulfill the PPID axioms~\cite{finn2018pointwise} (the proof is deferred to appendix~\ref{apx:sec:proofs}.
\begin{theorem}\label{thm:axioms}
    $i^ {\sx+}_{\cap}$ and $i^ {\sx-}_{\cap}$ satisfy Axioms~\ref{ax:sym}, \ref{ax:mono}, and~\ref{ax:self}.
\end{theorem}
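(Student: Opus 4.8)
The plan is to exploit the fact that both functionals depend on the collections $\bfa_1,\ldots,\bfa_m$ only through the single union event $U_m := \mfa_1 \cup \mfa_2 \cup \cdots \cup \mfa_m$ (and, for the misinformative term, through $\mft \cap U_m$), where $\mfa_j = \bigcap_{i\in\bfa_j}\{S_i = s_i\}$. Once this is noted, all three axioms reduce to elementary monotonicity of the probability measure $\mP$ together with the Finn--Lizier informative/misinformative decomposition. Symmetry (Axiom~\ref{ax:sym}) is then immediate for both $i^{\sx+}_\cap$ and $i^{\sx-}_\cap$: a finite union is invariant under any permutation of its operands, so $U_m$ and $\mft\cap U_m$ are unchanged, and hence so are $\log_2 \tfrac{1}{\mP(U_m)}$ and $\log_2\tfrac{\mP(\mft)}{\mP(\mft\cap U_m)}$. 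Throughout I assume the realizations lie in the support so that $\mP(\mfa_j)>0$ and $\mP(\mft\cap U_m)>0$, keeping every logarithm finite.

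For Monotonicity (Axiom~\ref{ax:mono}) I would observe that appending $\bfa_{m+1}$ replaces $U_m$ by $U_{m+1} = U_m \cup \mfa_{m+1} \supseteq U_m$. Monotonicity of $\mP$ gives $\mP(U_{m+1}) \ge \mP(U_m)$ and $\mP(\mft\cap U_{m+1}) \ge \mP(\mft\cap U_m)$. Since $i^{\sx+}_\cap = -\log_2\mP(U_m)$ and $i^{\sx-}_\cap = \log_2\mP(\mft) - \log_2\mP(\mft\cap U_m)$ are each decreasing in the relevant mass, both quantities weakly decrease, establishing the inequalities. The equality clause rests on the one point in the whole argument that is \emph{not} automatic, namely that index-set containment \emph{reverses} under the intersection defining the events: if $\bfa_i \subseteq \bfa_{m+1}$ then $\mfa_{m+1} = \bigcap_{k\in\bfa_{m+1}}\{S_k=s_k\} \subseteq \bigcap_{k\in\bfa_i}\{S_k=s_k\} = \mfa_i \subseteq U_m$, so that $U_{m+1}=U_m$ and nothing changes. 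This is the step I would flag as the main pitfall, since the naive guess $\bfa_i\subseteq\bfa_{m+1}\Rightarrow\mfa_i\subseteq\mfa_{m+1}$ has the inclusion backwards; getting the direction right is what makes the W\&B equality condition come out correctly.

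For Self-redundancy (Axiom~\ref{ax:self}) I would set $m=1$ and read off directly that $i^{\sx+}_\cap(t:\bfa) = \log_2\tfrac{1}{\mP(\mfa)} = -\log_2\mP(\mfa) = h(\bfa)$ and $i^{\sx-}_\cap(t:\bfa) = \log_2\tfrac{\mP(\mft)}{\mP(\mft\cap\mfa)} = -\log_2\mP(\mfa\mid t) = h(\bfa\mid t)$. To finish I would recall the Finn--Lizier decomposition of the pointwise mutual information, $i(t:\bfa) = \log_2\tfrac{p(\bfa\mid t)}{p(\bfa)} = h(\bfa) - h(\bfa\mid t)$, with informative part $i^+(t:\bfa):=h(\bfa)$ and misinformative part $i^-(t:\bfa):=h(\bfa\mid t)$; matching terms yields $i^{\sx+}_\cap(t:\bfa)=i^+(t:\bfa)$ and $i^{\sx-}_\cap(t:\bfa)=i^-(t:\bfa)$, and subtracting recovers $i^{\sx}_\cap(t:\bfa)=i(t:\bfa)$. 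The only care required here is to use the correct Finn--Lizier convention for $i^\pm$; no genuine obstacle arises, as the entire theorem is a matter of bookkeeping on unions and of aligning definitions.
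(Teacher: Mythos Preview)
Your proposal is correct and follows essentially the same approach as the paper: the paper argues symmetry from the commutativity of set operations, self-redundancy by direct substitution of $m=1$, and monotonicity from the monotonicity of $\mP$ under adding a further source event, with the equality clause obtained from exactly the inclusion reversal $\bfa_i\subseteq\bfa_{m+1}\Rightarrow\mfa_{m+1}\subseteq\mfa_i$ that you highlight. The only cosmetic difference is that the paper phrases monotonicity via the intersection of complements $\bmfa_1\cap\cdots\cap\bmfa_m$ shrinking, whereas you phrase it via the union $U_m=\mfa_1\cup\cdots\cup\mfa_m$ growing; these are De~Morgan duals and the arguments coincide.
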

\noindent In this way the violation of monotonicity of the total shared information $i^{\sx}_{\cap}$ can be completely explained in terms of misinformative contributions. In fact, there is a another form of monotonicity that should hold as well: monotonicity over the redundancy lattice. As noted above the redundancy lattice models inclusion of redundancies. So we would expect lower level redundancies to be smaller than higher level redundancies. Again this form of monotonicity does not hold for $i^{\sx}_{\cap}$ itself but for its informative and misinformative components as expressed in the following theorem:
\begin{theorem}\label{thm:mono-antichain}
        $i^{\sx\pm}_{\cap}$ increase monotonically on the redundancy lattice.
\end{theorem}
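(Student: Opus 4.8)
The plan is to reduce the lattice monotonicity to a single set-theoretic containment and then read off both inequalities directly from the closed forms \eqref{eq:isx_decomposed-plus} and \eqref{eq:isx_decomposed-minus}. Recall that the Williams and Beer redundancy-lattice order relates two antichains $\alpha=\{\bfa_1,\ldots,\bfa_k\}$ and $\beta=\{\bfb_1,\ldots,\bfb_l\}$ by $\beta\preceq\alpha$ if and only if for every collection $\bfa\in\alpha$ there is a collection $\bfb\in\beta$ with $\bfb\subseteq\bfa$. The statement to prove is that $\beta\preceq\alpha$ implies $i^{\sx+}_\cap(t:\beta)\le i^{\sx+}_\cap(t:\alpha)$ and $i^{\sx-}_\cap(t:\beta)\le i^{\sx-}_\cap(t:\alpha)$, i.e.\ that both components are larger at the higher node.

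First I would write both components in terms of the single ``informative exclusion'' event attached to an antichain. For an antichain $\gamma$ set $E_\gamma:=\bigcup_{\bfc\in\gamma}\mathfrak{c}$, where $\mathfrak{c}=\bigcap_{r\in\bfc}\{S_r=s_r\}$; by De~Morgan this is exactly the complement of the shared exclusion $\bigcap_{\bfc\in\gamma}\bar{\mathfrak{c}}$ appearing in \eqref{eq:isx}. With this notation \eqref{eq:isx_decomposed-plus} and \eqref{eq:isx_decomposed-minus} read $i^{\sx+}_\cap(t:\gamma)=\log_2\frac{1}{\mP(E_\gamma)}$ and $i^{\sx-}_\cap(t:\gamma)=\log_2\frac{\mP(\mft)}{\mP(\mft\cap E_\gamma)}$, so they are monotone decreasing functions of $\mP(E_\gamma)$ and of $\mP(\mft\cap E_\gamma)$ respectively.

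The key step is the containment $E_\alpha\subseteq E_\beta$ whenever $\beta\preceq\alpha$, which I would argue pointwise. Fix $\bfa\in\alpha$ and choose $\bfb\in\beta$ with $\bfb\subseteq\bfa$, which exists by the definition of $\preceq$. Since $\bfb\subseteq\bfa$ imposes fewer constraints, intersecting over the larger index set yields the smaller event, i.e.\ $\mfa=\bigcap_{r\in\bfa}\{S_r=s_r\}\subseteq\bigcap_{r\in\bfb}\{S_r=s_r\}=\mfb\subseteq E_\beta$. Taking the union over all $\bfa\in\alpha$ gives $E_\alpha\subseteq E_\beta$, hence $\mP(E_\alpha)\le\mP(E_\beta)$ and $\mP(\mft\cap E_\alpha)\le\mP(\mft\cap E_\beta)$. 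Substituting into the two closed forms and using monotonicity of $\log_2$ yields $i^{\sx+}_\cap(t:\beta)=\log_2\frac{1}{\mP(E_\beta)}\le\log_2\frac{1}{\mP(E_\alpha)}=i^{\sx+}_\cap(t:\alpha)$, and likewise $i^{\sx-}_\cap(t:\beta)\le i^{\sx-}_\cap(t:\alpha)$, which is the claim.

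I expect the main obstacle to be bookkeeping rather than depth: one must keep straight the order-reversal between index sets and events (a superset of indices gives a subset event) together with the direction of $\preceq$, so that the containment $E_\alpha\subseteq E_\beta$ points the right way and both components come out increasing rather than decreasing. A secondary point to dispatch is well-definedness of the logarithms: since the jointly observed outcome lies in $\mft$ and in every $\mfa$, it lies in $\mft\cap E_\gamma$, so both $\mP(E_\gamma)$ and $\mP(\mft\cap E_\gamma)$ are strictly positive and the expressions are finite. As an alternative I could instead derive the result abstractly from Theorem~\ref{thm:axioms}: adjoining to $\beta$ every collection of $\alpha$ leaves $i^{\sx\pm}_\cap$ unchanged, because each added collection is a superset of some $\bfb\in\beta$ and the equality case of Axiom~\ref{ax:mono} applies; the same enlarged list is obtained from $\alpha$ by adjoining collections, so by Axiom~\ref{ax:mono} it has value at most $i^{\sx\pm}_\cap(t:\alpha)$, and chaining the two (together with Axiom~\ref{ax:sym}) gives $i^{\sx\pm}_\cap(t:\beta)\le i^{\sx\pm}_\cap(t:\alpha)$.
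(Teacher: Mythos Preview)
Your proof is correct and follows essentially the same route as the paper: both arguments establish the set containment $E_\alpha\subseteq E_\beta$ (in the paper's notation, $\mP(\beta)\le\mP(\alpha)$ for $\alpha\preceq\beta$) by using that a larger index set gives a smaller intersection event, and then read off both inequalities from the closed forms for $i_\cap^{\sx\pm}$. Your presentation is in fact slightly more direct, since the paper introduces an auxiliary function $f:\beta\to\alpha$ and an intermediate union to reach the same containment, whereas you argue it pointwise in one step; your alternative derivation via Axioms~\ref{ax:sym} and~\ref{ax:mono} is also valid but is not the path the paper takes.
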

There is another apparent problem that can be addressed using the separation into informative and misinformative components, namely, the fact that both $i^{\sx}_{\cap}$ as well as $\pi^{\sx}$ can be negative. This can be interpreted in terms of misinformation as well. To this end we define misinformative and informative PPID terms $\pi_\pm^{\sx}$ via M\"{o}bius Inversions of $i^{\sx\pm}_{\cap}$. These informative and misinformative components of the PPID terms can be obtained recursively from $i^{\sx\pm}_{\cap}$ (see appendix~\ref{apx:sec:proofs}). They stand in the relation $\pi^{\sx}= \pi_+^{\sx}-\pi_-^{\sx}$ to the PPID terms. Now, even though $\pi^{\sx}$ may be negative, its components $\pi_+^{\sx}$ and $\pi_-^{\sx}$ are non-negative.
\begin{theorem}\label{thm:non-neg}
    The atoms $\pi_\pm^{\sx}$ are non-negative.
\end{theorem}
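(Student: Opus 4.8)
The plan is to exhibit each atom as a non-negative combination of manifestly non-negative quantities, in three moves: (i) reduce $i_\cap^{\sx\pm}$ to a \emph{modular} set-valuation on the redundancy lattice, (ii) rewrite the M\"obius inversion at a node as a single mixed difference over a Boolean interval, and (iii) linearise the logarithm so that modularity forces a factorisation with non-negative factors. First I would record that $i_\cap^{\sx+}$ and $i_\cap^{\sx-}$ depend on the antichain $\alpha$ only through the union event $U(\alpha):=\bigcup_{\bfa\in\alpha}\mfa$: by \eqref{eq:isx_decomposed-plus}--\eqref{eq:isx_decomposed-minus} one has $i_\cap^{\sx+}(t:\alpha)=-\log_2\mP(U(\alpha))$ and $i_\cap^{\sx-}(t:\alpha)=\log_2\mP(\mft)-\log_2\mP(\mft\cap U(\alpha))$, the leading terms being constants in $\alpha$. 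The structural key is that $\alpha\mapsto U(\alpha)$ is a lattice anti-homomorphism from the redundancy lattice into $(2^\Omega,\cup,\cap)$: the lattice join maps to set intersection and the lattice meet to set union (verified on the generators, then extended via the distributive law). Since the redundancy lattice is distributive, additivity of $\mP$ makes both $q^+(\alpha):=\mP(U(\alpha))$ and $q^-(\alpha):=\mP(\mft\cap U(\alpha))$ \emph{modular}, i.e. $q^{\pm}(\alpha\vee\beta)+q^{\pm}(\alpha\wedge\beta)=q^{\pm}(\alpha)+q^{\pm}(\beta)$, while monotonicity of $U$ (Theorem~\ref{thm:mono-antichain}) gives that $q^{\pm}$ decrease as one moves up the lattice.

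Next I would compute the atom at a node $\alpha$ using Rota's crosscut theorem. If $\alpha$ is the bottom element, $\pi_\pm^{\sx}(\alpha)=i_\cap^{\sx\pm}(\alpha)\ge0$ directly. Otherwise let $\gamma_1,\dots,\gamma_k$ be the lower covers of $\alpha$; in a distributive lattice the interval $[\bigwedge_i\gamma_i,\alpha]$ is Boolean ($\cong B_k$), the M\"obius function is supported there with values $(-1)^{|S|}$, and therefore
\[
\pi_\pm^{\sx}(t:\alpha)=\sum_{S\subseteq[k]}(-1)^{|S|}\,i_\cap^{\sx\pm}\Big(t:\textstyle\bigwedge_{i\in S}\gamma_i\Big),
\]
with the convention $\bigwedge_{\emptyset}\gamma_i=\alpha$. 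This is exactly the top mixed difference of $f^{\pm}:=-\log_2 q^{\pm}$ over the cube $B_k$, and the additive constants in $i_\cap^{\sx\pm}$ drop out because $k\ge1$. It remains to show this mixed difference is non-negative.

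The decisive step is to linearise the logarithm via the Frullani representation $-\ln x=\int_0^\infty(e^{-\lambda x}-e^{-\lambda})\lambda^{-1}\,d\lambda$. Discarding the $\alpha$-independent term, the mixed difference of $f^{\pm}$ equals $(\ln 2)^{-1}\int_0^\infty\lambda^{-1}\,\Delta_\lambda\,d\lambda$, where $\Delta_\lambda$ is the top mixed difference of $\alpha\mapsto e^{-\lambda q^{\pm}(\alpha)}$ over $B_k$. Because $q^{\pm}$ is modular, its restriction to the Boolean interval is affine in the cube coordinates, $q^{\pm}=q_0+\sum_i c_i x_i$ with $c_i\le0$ by monotonicity; hence $e^{-\lambda q^{\pm}}$ factorises across the $k$ directions and its top mixed difference factorises as $\Delta_\lambda=e^{-\lambda q_0}\prod_{i=1}^k\big(e^{-\lambda c_i}-1\big)\ge0$, each factor being non-negative since $c_i\le0$. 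Integrating over $\lambda$ gives $\pi_\pm^{\sx}(t:\alpha)\ge0$. The argument for $q^-$ is identical, since intersecting with the fixed event $\mft$ preserves unions and intersections, so the anti-homomorphism and hence modularity are inherited.

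I expect the main obstacle to be precisely the passage from the pairwise statement to all orders: monotonicity together with the two-collection supermodularity $\mP(A)\mP(B)\ge\mP(A\cap B)\mP(A\cup B)$ is \emph{not} enough to bootstrap non-negativity of the higher mixed differences on the cube, and a naive induction on $k$ fails because the signs do not line up. The Frullani linearisation combined with modularity — which turns $q^{\pm}$ affine and thereby makes $e^{-\lambda q^{\pm}}$ factorise — is what circumvents this difficulty. The remaining care lies in verifying the lattice-theoretic inputs on which everything rests: distributivity of the redundancy lattice, the anti-homomorphism property of $U$, and the crosscut reduction to a Boolean interval.
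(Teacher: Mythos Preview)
Your proposal is correct and takes a genuinely different route from the paper in the decisive step. Both arguments share the same setup: the M\"obius inversion at $\alpha$ is supported on the Boolean interval generated by the lower covers $\gamma_1,\dots,\gamma_k$ (the paper's Theorem~\ref{thm:moeb-inv}), and the map $\alpha\mapsto\mP(U(\alpha))$ is affine on that cube. The paper obtains the affine structure concretely via Proposition~\ref{prop:the-mapping}, showing $\mP(\bigwedge_{\beta\in\mathcal B}\beta\wedge\gamma_i)=\mP(\bigwedge_{\beta\in\mathcal B}\beta)+d_i$ with $d_i=\mP(\gamma_i)-\mP(\alpha)$; you obtain it abstractly by observing that $U$ is a lattice anti-homomorphism into $(2^\Omega,\cup,\cap)$ (your identity $U(\alpha)=J^{-1}(\uparrow\alpha)$ with $J(\omega)=\{i:S_i(\omega)=s_i\}$ makes this transparent), whence $q^\pm=\mP\circ U$ is modular and therefore affine on Boolean sublattices. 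These are the same fact in different clothing.

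Where the two proofs part ways is in showing that the alternating log-sum is non-negative. The paper runs an induction on $k$: it introduces auxiliary functions $h_k$ built from $\log_2(1+d_1/x)$, proves their convexity and monotone decrease on suitable affine slices, and chains the first-order convexity inequality (Theorem~\ref{thm:1st-ord-convex} and Proposition~\ref{prop:conv-dec-gen}) through the levels. Your Frullani linearisation $-\ln x=\int_0^\infty(e^{-\lambda x}-e^{-\lambda})\lambda^{-1}d\lambda$ bypasses the induction entirely: affinity of $q^\pm$ makes $e^{-\lambda q^\pm}$ a product across cube directions, so the $k$-fold mixed difference factorises as $e^{-\lambda q_0}\prod_i(e^{-\lambda c_i}-1)\ge 0$, and one integrates. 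Conceptually this is the statement that $-\log$ is a Bernstein function (equivalently, $1/x$ is completely monotone), which is exactly what is needed to push non-negativity from order two to all orders. Your approach is shorter and explains \emph{why} the higher-order signs cooperate; the paper's approach is more elementary in that it uses only real convexity and no integral representation, at the cost of a somewhat delicate induction.
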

\noindent In appendix~\ref{apx:sec:proofs}, we will provide the necessary tools to prove the above theorems, in particular, theorem~\ref{thm:non-neg}. To sum up, this section shows that $i_\cap^{\sx}$ results in a consistent and meaningful PPID. The apparent problems of violating monotonicity and non-negativity can be resolved by separating misinformative and informative components and showing that these components do satisfy the desired properties (for more discussion on the idea of misinformation within local Shannon information theory see Discussion). 

This concludes our discussion of the PPID induced by the $i_\cap^{\sx}$. The global, variable-level PID can be obtained by simply averaging the local quantities over all possible realizations of the source and target random variables. For a complete worked example of the XOR probability distribution see Figure \ref{fig:XOR_worked}, subfigure H in particular. In the next section we establish the differentiability of $i_\cap^{\sx}$ and $\pi^{\sx}$, an important advantage of these measures compared to other approaches.

\subsection[]{Differentiability of $i_\cap^{\sx}$ and $\pi_{\pm}^{\sx}$}
    We will discuss the differentiability of the PPID obtained by $i_\cap^{\sx}$.  This is a desirable property~\cite{wibral2017partialneuralgoal} that is proven to be lacking in some measures~\cite{makkeh2017bivariate,makkeh2018optimizingbroja,ince2017measuring} or evidently lacking for other measures since their definitions are based on the maximum or (minimum) of multiple information quantities.
    
    Let $\mathscr{A}([n])$ be the redundancy lattice (see section~\ref{apx:sec:proofs}), $(T, S_1, \dots, S_n)$ be discrete and finite random variables, and let us represent their joint probability distribution as a vector in $[0,1]^{|\mathcal A_T|\times |\mathcal A_{S_1}|\times\dots\times|\mathcal A_{S_n}|}.$ Thus, the set of all joint probability distributions of  $(T, S_1, \dots, S_n)$ forms a simplex that we denote by $\Delta_P.$  Note that $i_\cap^{\sx}$ and $\pi_\pm^{\sx}$ are functions of the probability distributions of $(T, S_1, \dots, S_n)$ and so they can be differentiable w.r.t. the probability distributions. Formally, for a given $(T, S_1, \dots, S_n)$, we show that $i_\cap^{\sx}$ and $\pi_\pm^{\sx}$ are differentiable over the interior of $\Delta_P.$ 
    
    Since $\log_2$ is continuously differentiable over the open domain $\RR_+$, then using definitions~\eqref{eq:isx_decomposed-plus} and~\eqref{eq:isx_decomposed-minus}, $i_\cap^{\sx+}$ and $i_\cap^{\sx-}$ are both continuously differentiable over the interior of $\Delta_p.$ Now, for $\alpha\in\mathscr A([n])$, using theorem~\ref{thm:moeb-inv} and proposition~\ref{prop:the-mapping}
    \begin{equation}\label{eq:diff_atom}
        \small
        \pi_+^{\sx}(t:\alpha) = \sum_{\substack{\gamma\in\mathscr P(\alpha^-\backslash\{\gamma_1\})}}(-1)^{|\gamma|}\log_2\bigg(\frac{p(\gamma) + d_1}{p(\gamma)}\bigg),
    \end{equation}
    where $\alpha^-= \{\gamma_1,\gamma_2,\dots,\gamma_k\}$ are the children of $\alpha$ ordered increasingly w.r.t.~their probability mass and  $\alpha^-:=\{\beta\in\mathscr A([n])\mid \beta\prec\alpha, \beta\preceq\gamma\prec\alpha\Rightarrow\beta=\gamma\}.$ Hence, $\pi_+^{\sx}$ is continuously differentiable over the interior of $\Delta_P$ since the function $\nicefrac{x+d_1}{x}$ and its inverse are continuously differentiable over the open domain $\RR_+.$ Similarly, $\pi_-^{\sx}$ is continuously differentiable over the interior $\Delta_P.$

\begin{figure}
    \centering
    \includegraphics[width=0.42\textwidth]{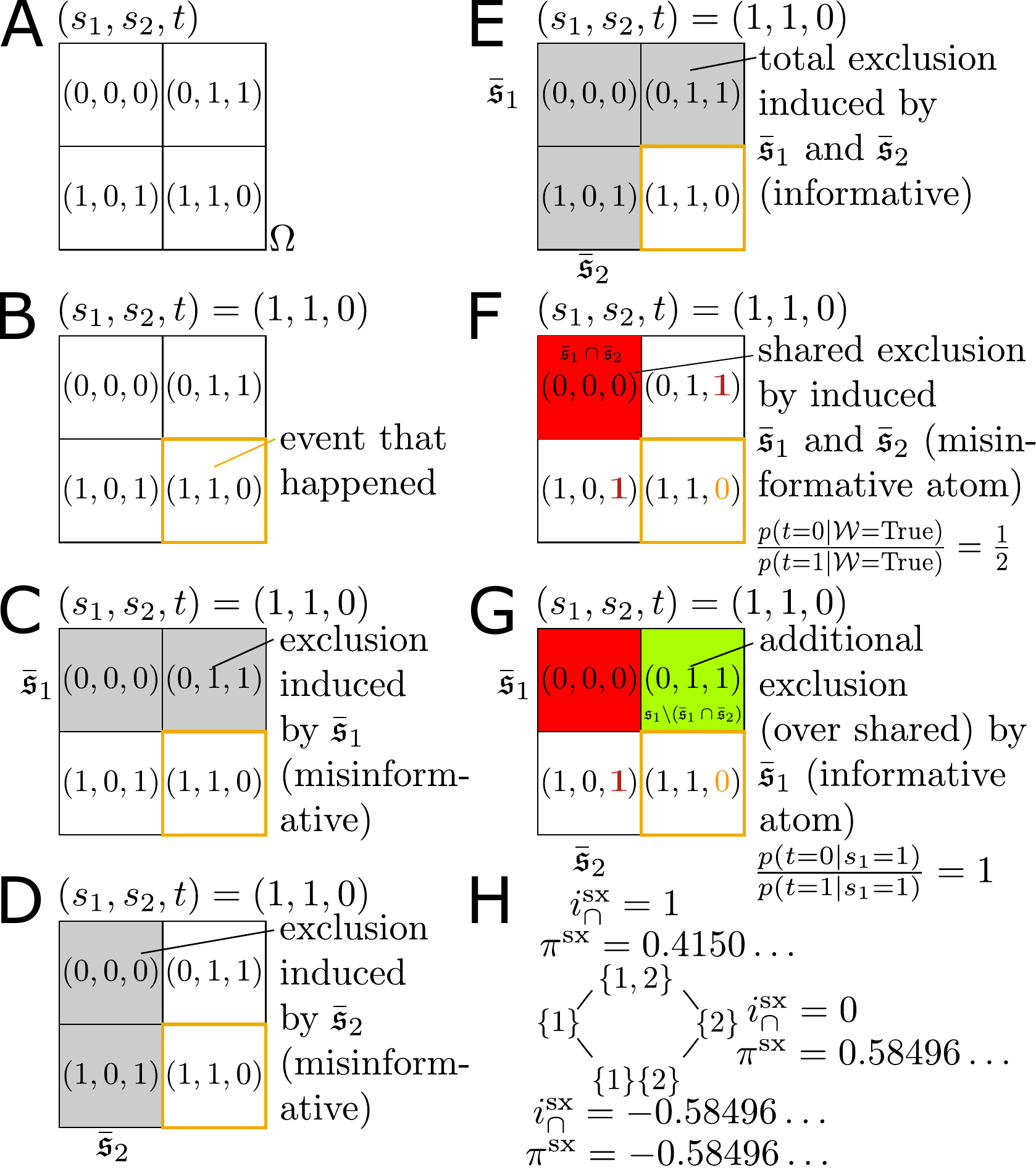}
    \caption{\label{fig:XOR_worked}\textbf{Worked example of $i_\cap^{\sx}$ for the classical \textsc{Xor}.} Let $T=XOR(S_1,S_2)$ and $S_1,S_2\in\{0,1\}$ be independent uniformly distributed and consider the realization $(s_1,s_2,t)=(1,1,0)$. (A-B) The sample space $\Omega$ and the realized event (gold (gray) frame). (C) The exclusion of events induced by learning that $S_1=1$, i.e. $\bmfs_1=\{0\}$ (gray). (D) Same for $\bmfs_2=\{0\}$. (E) The union of exclusions fully determines the event $(1,1,0)$ and yields 1 bit of $i(t=0: s_1=1, s_2=1)$. (F) The shared exclusions by $\bmfs_1=\{0\}$ and $\bmfs_2=\{0\}$, i.e., $\bmfs_1\cap\bmfs_2$ exclude only $(0,0,0)$. This is a misinformative exclusion, as it raises the probability of events that did not happen ($t=1$) relative to those that did happen ($t=0$) compared to the case of complete ignorance. (G) Learning about one full variable, i.e., obtaining the statement that $\bmfs_1=\{0\}$ adds additional probability mass to the exclusion (green (light gray)). The shared exclusion (red (dark gray)) and the additional unique exclusion (green (light gray)) induced by $s_1$ create an exclusion that is uninformative, i.e., the probabilities for $t=0$ and $t=1$ remain unchanged by learning $s_1=1$. At the level of the $\pi^{\sx}$ atoms, the shared and the unique information atom cancel each other. (H) Lattice with $i_\cap^{\sx}$ and $\pi^{\sx}$ terms for this realization. Other realizations are equivalent by the symmetry of XOR, thus, the averages yield the same numbers. Note that the necessity to cancel the negative shared information twice to obtain both $i(t=0:s_1=1)=0$ and $i(t=0:s_2=1)=0$, results in a synergy $<1$ bit. Also note that while adding the shared exclusion from (F) and the unique exclusions for $s_1$ and $s_2$ results in the full exclusion from (E), information atoms add differently due to the nonlinear transformation of excluded probability mass into information via $-\log_2 p(\cdot)$ -- compare (H).}
\end{figure}

\section{\label{sec:discussion}Discussion}
In this section, we first present further properties of $i^{\sx}_\cap.$ Then, we provide an operational interpretation of $i^{\sx}_\cap,$ and suggest an approach to compare this operational interpretation with that of other measures. Following this, we give the intuition behind the ``intrinsic dependence'' of PID atoms for joint source-target distributions where the number of these atoms is larger than these distributions' alphabet size. Finally, we provide two applications where $i^{\sx}_\cap$ is particularly well suited and discuss the computational complexity of $i^{\sx}_\cap.$ 

\subsection{Direct consequences of $i^{\sx}_{\cap}$ being a local mutual information}\label{sec:direct_consequences}
The fact that $i^{\sx}_{\cap}$ has the form of a regular local mutual information has several interesting consequences.

\paragraph{Implied entropy decomposition} Since the local entropy of a realization of a set of variables can be written as a self-mutual information our decomposition also directly implies an entropy decomposition that inherits the properties of the lattices described in section \ref{sec:lattice}. We start by the local entropy $h(\bfa_1,\ldots,\bfa_m)$ of a set of collections of realizations of variables $S_i=s_i$. Note that these collections have to be considered jointly, hence the comma~\footnote{If the collections where considered in an OR relation, there would be no random variable on which the average entropy is defined (see discussion of the local indicator variable $w_{\bfa_1,\ldots,\bfa_m}$)}. Thus, we can equally well write the entropy that is to be decomposed as $h(\{s_i\mid i \in \bigcup \bfa_j\})$. Thus, we can consider the $s_i$ together as a joint random variable whose entropy is to be decomposed. This can be done by realizing first $h(\{s_i\mid i \in \bigcup \bfa_j\})=i(\{s_i\mid i \in \bigcup \bfa_j\}:\{s_i\mid i \in \bigcup \bfa_j\})$, and then applying our PID formalism. In this decomposition then terms of the form $i_\cap^{\sx} (\{s_i\mid i \in \bigcup \bfa_j\}:\bfa_1;\ldots;\bfa_m)=:h_\cap^{\sx}(\bfa_1;\ldots;\bfa_m)$ appear. In other words, on the target side of the arguments of $i_\cap^{\sx}$ we will always find the joint random variable, whereas the collections appear as usual on the source side.

\paragraph{\label{sec:TargetChainRule}Target chain rule and average measures} Another consequence is that $i_\cap^{\sx}$ satisfies a target chain rule for a composite target variable $T=\{t_1,t_2\}$:

\[
\small
\begin{split}
i^{\sx}_{\cap}(t_1,t_2: \bfa_1; \bfa_2; \ldots; \bfa_m) &= i^{\sx}_{\cap}(t_1: \bfa_1; \bfa_2; \ldots; \bfa_m)\\
                                                        &+i^{\sx}_{\cap}(t_2: \bfa_1; \bfa_2; \ldots; \bfa_m\mid t_1),
\end{split}
\]
\noindent where the second term is $\log_2\frac{\mP(\mft_2\mid\mft_1) - \mP(\mft_2, \bmfa_1, \bmfa_2, \ldots, \bmfa_m\mid\mft_1)}{1 - \mP(\bmfa_1, \bmfa_2, \ldots, \bmfa_m\mid \mft_1)} - \log_2 \mP(\mft_2\mid \mft_1).$ Moreover, by linearity of the averaging a corresponding target chain rule is satisfied for the average shared information,  $I_\cap^{\sx}$, defined by.

{\small
\begin{align}\label{eq:isx_average}
I^{\sx}_{\cap}(T: \mathbf{A}_1;\ldots; \mathbf{A}_m)   &:=\sum_{t, s_1, \ldots, s_n} p(t, s_1, \ldots, s_n) i^{\sx}_\cap(t: \bfa_1; \ldots; \bfa_m)\notag\\
                &=\sum_{t, s_1, \ldots, s_n} p(t, \mathbf{s}_1, \ldots, \mathbf{s}_n) i(t:W_{\bfa_1,\ldots,\bfa_m}=1),
\end{align}
}
\noindent where probabilities related to the indicator variable $W_{\bfa_1,\ldots,\bfa_m}$ have to be recomputed for each possible combination of source and target realizations. Note that this indicator variable simply indicates the truth of the statement $\mathcal{W}_{\bfa_1,\ldots,\bfa_m}$ from section \ref{sec:Logic}. Also note that in Eq.~\eqref{eq:isx_average} the averaging still runs over all combinations of $t,s_1,\ldots,s_n$, and the weights are still given by $p(t,s_1,\ldots,s_n)$, not $p(t,W_{\bfa_1,\ldots,\bfa_m}=1)$. Having different variables in the averaging weights and the local mutual information terms makes the average shared information structurally different from a mutual information \footnote{As was to be expected from the difficulties encountered in the past trying to define measures of shared information.}. One consequence of this is that in principle the average $I^{\sx}_{\cap}$ can be negative. This also holds for the averages of the other information atoms on the lattice (see next section for the lattice structure).  Thus, the local shared information may be expressed as a local mutual information with an auxiliary variable constructed for that purpose, and multiple such variables have to be constructed for a definition of a global shared information.

\paragraph{Upper bounds.} First, we can assess the self-shared information of a collection of variables:
\begin{equation}\label{eq:self_shared}
    \small
    \begin{split}
        i^{\sx}_{\cap}(\bfa_1;\ldots; \bfa_m:\bfa_1;\ldots; \bfa_m) &:=i(W_{\bfa_1,\ldots,\bfa_m}=1:W_{\bfa_1,\ldots,\bfa_m}=1)\\
        &=h(W_{\bfa_1,\ldots,\bfa_m}=1)~,
    \end{split}
\end{equation}
\noindent where the notation $\bfa_1;\ldots; \bfa_m$ means the event defined by the complement of the intersection of exclusions induced by the $\bfa_i$, as before. This quantity is greater than or equal to zero and is the upper bound of shared information that the source variables can have about any realization $u$ of any target variable $U$, i.e.,
{\small
\[
    i^{\sx}_{\cap}(\bfa_1;\bfa_2;\ldots;\bfa_m:\bfa_1;\bfa_2;\ldots;\bfa_m) \geq i^{\sx}_{\cap}(u: \bfa_1;\bfa_2;\ldots;\bfa_m)
\]
}\noindent for any $u \in \mathcal{A}_U.$ This upper bound has conceptual links to maximum extractable shared information from \cite{rauh2017extractable}. Moreover, this upper bound may be nonzero even for independent sources, showing how the so-called mechanistic shared information arises.

\subsection[]{Operational interpretation of $i_\cap^{\sx}$}
Being a local mutual information, $i_\cap^{\sx}$ keeps all the operational interpretations of that measure. For example, in keeping with Woodward \cite{woodward1952information} it measures the information available in the statement $\mathcal{W}$ for inference about the value $t$ of the target. Specifically, a negative value of the local shared information indicates that an agent who is only in possession of the shared information is more likely to mispredict the outcome of the target (e.g., FIGs~\ref{fig:XOR_worked},~\ref{fig:4bit_parity_worked}) than without the shared information; a positive value means that the shared information makes the agent more likely to choose the correct outcome. The unsigned magnitude of the shared information informs us about how relatively certain the agent should be about their prediction.

What remains to be clarified then is the meaning of the average expression $I_\cap^{\sx}$. As detailed above the average is taken with respect to the probabilities of the realizations of the source variables and the target variable, not with respect to the dummy variables encoding the truth value of the respective statements $\mathcal{W}$ --- as an average mutual information would require. To understand the meaning of this particular average it is instructive to start by ruling out two false interpretations. Again, consider an agent who tries to predict the correct value of target $t.$ In order to do so, the agent utilizes a particular information channel.

For the first false interpretation, consider a channel that takes the realizations of sources and target and produces the statements $\mathcal{W}$ carrying the shared information. If the receiver of this channel used it multiple times in the case of a negative $I_\cap^{\sx}$, then this receiver would \emph{learn} that the shared information received is negative on average and could modify their judgment. This leads us to a second false interpretation: the average could be understood as an average over an \emph{ensemble} of agents, where each agent uses the above channel only once, thus avoiding the issue just described. Even in this scenario however there is a problem: if the agent knew that the information provided by $\mathcal{W}$ is shared by the true source realizations, then the agent could derive the truth of all sub-statements of $\mathcal{W}$. Accordingly, the agents would receive more than only the shared information.

In order to obtain the appropriate interpretation of shared information we have to consider a channel that masks the metainformation that all substatements of $\mathcal{W}$ are true, and also makes learning impossible. This is achieved by a channel that produces true statements $\mathcal{V}$ about the source variables which have the logical structure of $\mathcal{W}$, but do not always carry shared information. Consider the information shared by all sources. In this case the channel would randomly produce (true) statements of the form $ \mathcal{V}_{s_1,\ldots,s_n} = \big((S_1=s_1) \vee \ldots \vee  (S_n=s_n)\big)$ but where some of the substatements might be false. Then $\mathcal{V}$ does not always carry shared information (only in case all substatements happen to be true). The receiver knows the joint distribution of sources and target and performs inference on $t$ in a Bayes optimal way. Such a channel would provide non-negative average mutual information. However, for a channel of this kind the average taken to compute $I_\cap^{\sx}$, is only over those channel uses where $\mathcal{V}$ actually did encode shared information. In certain cases this average can be negative (see Table \ref{tab:V_channel}). 

As already alluded to above, the setting of our operational interpretation contrasts with that of other approaches to PID that take the perspective of multiple agents having \emph{full} access to individual source variables (or collections thereof), and that then design measures of unique and redundant information based on actions these agents can take or rewards  they obtain in decision- or game-theoretic settings based on their access to full source variables (e.g., in~\cite{bertschinger2014quantifying,finn2018pointwise,ince2017measuring}). While certainly useful in the scenarios invoked in~\cite{bertschinger2014quantifying,finn2018pointwise,ince2017measuring}, we feel that these operational interpretations may almost inevitably mix inference problems (i.e., information theory proper) with decision theory. Also, they typically bring with them the use of minimization or maximization operations to satisfy the competitive settings of decision or  game theory. This, in turn, renders it difficult to obtain a differentiable measure of local shared information.

In sum, we feel that the question of how to decompose the information provided by multiple source variables about a target variable may indeed not be a single question, but multiple questions in disguise. The most useful answer will therefore depend on the scenario where the question arose. Our answer seems to be useful in communication settings, and where quantitative statements about dependencies between variables are important (e.g., the field of statistical inference, where the PID enumerates all possible types of dependencies of the dependent (target) variable on the independent (source) variables).
\begingroup
    \squeezetable
    \begin{table}
        \caption{\label{tab:V_channel}\textbf{$\mathcal{V}$-channel for \textsc{Xor}}. \emph{Left:} probability masses for each realization. \emph{Middle:} Equiprobable $\mathcal{V}$-statements associated with each realization such that respective statement carrying shared information is listed first (marked by $\mathcal{W}$) \emph{Right:} predicted target inferred from $\mathcal{V}$ and where \checkmark refers to \emph{correct} predictions and \ding{55} refers to \emph{incorrect} ones. Using $\mathcal{V}$ a receiver obtains positive average mutual information, but the contribution of $\mathcal{W}$ statements is negative. \emph{Bottom:} the sign of $I^{\mathcal V}$, the average information provided by all $\mathcal {V}$-statements, and that of $I_\cap^{\sx}$.}
        \centering
        \begin{tabular}{c||cc|c||c||cc}
            \toprule
            \multicolumn{1}{c||}{}    &\multicolumn{3}{c||}{Realization}    &\multicolumn{1}{c||}{Channel Output} &\multicolumn{2}{c}{Inference}\\
            \cmidrule(r){1-1}\cmidrule(lr){2-4}\cmidrule(lr){5-7}
             $p$                &$s_1$  &$s_2$  &$t$    &$\mathcal V$-statement            & predicted $t$ &Correct?   \\
             \nicefrac{1}{4}    &0      &0      &0      & $(S_1=0)\vee(S_2 =0)~(\mathcal{W})$    &1                  &\ding{55}\\
                                &       &       &       & $(S_1=0)\vee(S_2 =1)$             &0                  &\checkmark\\
                                &       &       &       & $(S_1=1)\vee(S_2 =0)$             &0                  &\checkmark\\
             \cmidrule{1-7}
             \nicefrac{1}{4}    &0      &1      &1      & $(S_1=0)\vee(S_2 =1)~(\mathcal{W})$    &0                  &\ding{55}\\
                                &       &       &       & $(S_1=0)\vee(S_2 =0)$             &1                  &\checkmark\\
                                &       &       &       & $(S_1=1)\vee(S_2 =1)$             &1                  &\checkmark\\
             \cmidrule{1-7}
             \nicefrac{1}{4}    &1      &0      &1      & $(S_1=1)\vee(S_2 =0)~(\mathcal{W})$    &0                  &\ding{55}\\
                                &       &       &       & $(S_1=1)\vee(S_2 =1)$             &1                  &\checkmark\\
                                &       &       &       & $(S_1=0)\vee(S_2 =0)$             &1                  &\checkmark\\
             \cmidrule{1-7}
             \nicefrac{1}{4}    &1      &1      &0      & $(S_1=1)\vee(S_2 =1)~(\mathcal{W})$    &1                  &\ding{55}\\
                                &       &       &       & $(S_1=1)\vee(S_2 =0)$             &0                  &\checkmark\\
                                &       &       &       & $(S_1=0)\vee(S_2 =1)$             &0                  &\checkmark\\
             \bottomrule
        \end{tabular}
        \vskip 5pt{\tiny $I^{\mathcal V}(T: S_1; S_2) > 0~\text{(4 \ding{55} and 8 \checkmark)}\quad\text{and}\quad I_\cap^{\sx}(T: S_1; S_2) < 0~\text{(4 \ding{55} and 0 \checkmark})$}
    \end{table}
\endgroup
\subsection{Evaluation of $I^{\sx}_\cap$ on $P$ and on optimization distributions obtained in other frameworks.} Since our approach to PID relies only on the original joint distribution $P$ it can be applied to other PID frameworks where distributions $Q(P)$ are derived from the original $P$ of the problem -- e.g., via optimization procedures, as it is done for example in~\cite{bertschinger2014quantifying,ince2017measuring}. This yields some additional insights into the operational interpretation of our approach compared to others, by highlighting how the optimization from $P$ to $Q(P)$ shifts information between PID atoms in our framework.

\subsection{Number of PID atoms vs alphabet size of the joint distribution}
The number of lattice nodes rises very rapidly with increasing numbers of sources. Thus, the number of lattice nodes may outgrow the joint symbol count of the random variables, i.e., the number of entries in the joint probability distribution. One may ask, therefore, about the independence of the atoms on the lattice in those cases (remember that the atoms were introduced in order to have the ``independent'' information contributions of respective variable configurations at the lattice nodes). As shown in Fig.~\ref{fig:Lattice-map} and~\ref{fig:d1} our framework reveals multiple additional constraints at the level of exclusions via the family of mappings from Proposition~\ref{prop:the-mapping}. This explains mechanistically why not all atoms are independent in cases where the number of atoms is larger than the number of symbols in the joint distribution.

\subsection{Key applications}
Due to the fact that PID solves a basic information-theoretic problem, its applications seem to cover almost all fields where information theory can be applied. Here, we focus on two applications for which our measure is suited particularly well: the first application requires localizability and differentiability; the second application does not require differentiability, but requires at least continuity of the measure on the space of the underlying probability distributions.

\subsubsection{Learning neural goal functions}
In~\cite{wibral2017partialneuralgoal} we argued that information theory, and in particular the PID framework, lends itself to unify various neural goal functions, e.g., infomax and others. We also showed how to apply this to learning in neural networks via the coherent infomax framework of Kay and Phillips~\cite{kay2011coherent}. Yet, this framework was restricted to goal functions expressible using combinations, albeit complex ones, of terms from classic information theory, due to the lack of a differentiable PID measure. Goal functions that were only expressible using PID proper could not be learned in the Kay and Phillips framework, and in those cases PID would only serve to assess the approximation loss. 

Our new measure removes this obstacle and neural networks or even individual neurons can now be devised to learn pure PID goal functions. A possible key application is in hierarchical neural networks with a hierarchy of modules, where each module contains two populations of neurons. These two populations represent supra- and infragranular neurons and coarsely mimic their different functional roles. One population represents  so-called layer 5 pyramidal cells. It serves to send the shared information between their bottom-up (e.g., sensory) inputs  and their top-down (contextual) inputs downwards in the hierarchy; the other population represents layer 3 pyramidal cells and sends the synergy  between the bottom-up inputs and the top-down  inputs upwards in the hierarchy. For the first population the extraction of shared information between higher and lower levels in the hierarchy can be roughly equated to learning an internal model, while for the second population the extraction of synergy is akin to computing a generalized error (see \cite{bastos2012canonical,larkum2013cellular} and references therein for the neuroanatomic background of this idea). Thus, a hierarchical network of this kind can perform an elementary type of predictive coding. The full details of this application scenario are the topic of another study, however.

\subsubsection{Information modification in distributed computation in complex systems}
If one desires to frame distributed computation in complex systems in terms of the elementary operations on information performed by a Turing machine, i.e., the storage, transfer, and modification of information, information-theoretic measures for each of these component operations are required. For storage and transfer well established measures are available, i.e., the active information storage~\cite{lizier2012localstorage} and the transfer entropy~\cite{schreiber2000measuring, wibral2013measuring,lizier2008localtransfer}. For modification, in contrast, no established measures exist, yet an appropriate measure of synergistic mutual information from a partial information decomposition has been proposed as a candidate measure of information modification~\cite{lizier2013towardsinfomodif}. An appropriate measure in this context has to be localizable (i.e., it must be possible to evaluate the measure for a single event) in order to serve as an analysis of computation locally in space and time, and it has to be continuous in terms of the underlying probability distribution. Both of these conditions were already met for the PPID measure of Finn and Lizier~\cite{finn2018pointwise}; our novel measure here adds the possibility to differentiate the measure on the interior of the probability simplex, which makes it even more like a classic information measure. This is important to determine the input distribution that maximizes synergy in a system, i.e., the input distribution that reveals the information modification capacity of the computational mechanism in a system as suggested in~\cite{wibral2017modification}.

\subsection[]{Computational complexity of the PID using $i^{\sx}_\cap$}
Real-world applications of PID will not necessarily be confined to the standard two-input variable case -- hence the importance of the organization scheme for higher order terms that are provided by the lattice structure. For such real-world problems the computational complexity of the computation of each atom on the lattice becomes important -- not least because of the potentially large number of atoms (see below). This holds in particular when additional nonparametric statistical tests of PID measures obtained from data require many recomputations of the measures. We, therefore, discuss the computational complexity of our approach.

For each realization $s=(s_1,\ldots,s_n)$ and $t$, our PPID is obtained by computing the atoms $\pi^{\sx}_\pm(t:\alpha)$ for each $\alpha\in\mathscr A([n]).$ In Appendix~\ref{apx:sec:proofs}, we show that any $\pi^{\sx}_\pm(t:\alpha)$ is evaluated as follows:
\[\small
    \pi_\pm^{\sx}(t:\alpha) =  i^{\sx\pm}_{\cap}(t:\alpha) - \sum_{\beta\prec\alpha}\pi_\pm^{\sx}(t:\beta)\quad\forall~\alpha,\beta\in\mathscr A([n]),
\]
where computing any $i^{\sx}_\cap(t:\alpha)$ is linear in the size of $\mathcal A_{T,S}$, the alphabet of the joint random variable $(T,S_1,\ldots,S_n).$ Moreover, using $i^{\sx}_\cap$ as a redundancy measure, the closed form of $\pi_\pm^{\sx}$ derived in~\eqref{eq:diff_atom} shows that the computation of our PID is trivially parallelizable over atoms and realizations, which is crucial for larger number of sources. The importance of parallelization is due to the rapid growth of PID terms $M$ when the number of sources gets larger for any PID lattice-based measure. This $M$ grows super exponentially as the $n$-th Dedekind number $d(n)-2$. At present even enumerating $M$ is practically intractable beyond $n>8$.

\section{\label{sec:examples}Examples}
    In this section, we present the PID provided by our $i_\cap^{\sx}$ measure for some exemplary probability distributions. Most of the distributions are chosen from Finn and Lizier~\cite{finn2018pointwise} and previous examples in the PID literature. The code for computing $\pi^{\sx}$ is available on the IDTxl toolbox~\url{http://github.com/pwollstadt/IDTxl}~\cite{wollstadt2018idtxl}.
    
    \subsection[]{Probability distribution \textsc{PwUnq}}
    We start by the pointwise unique distribution (\textsc{PwUnq}) introduced by Finn and Lizier~\cite{finn2018pointwise}. This distribution is constructed such that for each realization, only one of the sources holds complete information about the target while the other holds no information. The aim was to structure a distribution where at no point (realization) the two sources give the same information about the target. Hence, Finn and Lizier argue that, for such distribution, there should be no shared information.  Also, this distribution highlights the need for a pointwise analysis of the PID problem.
    
    Since in all of the realizations, the shared exclusion does not alter the likelihood of any of the target events compared to the case of total ignorance, $i_\cap^{\sx}$ will indeed give zero redundant information. Thus, the PID terms resulting from $i_\cap^{\sx}$ are the same as the those resultant from $r_{\min}$~\cite{finn2018pointwise} and $I_{\mathrm{ccs}}$~\cite{ince2017measuring} measures (see table~\ref{tab:pwunq}).
    \begingroup
    \squeezetable
    \begin{table}
        \caption{\label{tab:pwunq}\textbf{\textsc{PwUnq} Example}. \emph{Left:} probability mass diagrams for each realization. \emph{Right:} the pointwise partial information decomposition for the informative and misinformative. \emph{Bottom:} the average partial information decomposition.}
        \centering
        \begin{tabular}{c||cc|c||cccc|cccc}
            \toprule
            \multicolumn{1}{c||}{}    &\multicolumn{3}{c||}{Realization}    &\multicolumn{4}{c|}{$\pi_+^{\sx}$} &\multicolumn{4}{c}{$\pi_-^{\sx}$}\\
            \cmidrule(r){1-1}\cmidrule(lr){2-4}\cmidrule(lr){5-8}\cmidrule(l){9-12}
             $p$                &$s_1$  &$s_2$  &$t$    &$\{1\}\{2\}$   &$\{1\}$    &$\{2\}$    &$\{1,2\}$   &$\{1\}\{2\}$ &$\{1\}$   &$\{2\}$    &$\{1,2\}$\\
             \nicefrac{1}{4}    &0         &1          &1       &1              &0          &1          &0          &1 
             &0         &0          &0\\
             \nicefrac{1}{4}    &1         &0          &1       &1              &1          &0          &0          &1 
             &0         &0          &0\\
             \nicefrac{1}{4}    &0         &2          &2       &1              &0          &1          &0          &1 
             &0         &0          &0\\
             \nicefrac{1}{4}    &2         &0          &2       &1              &1          &0          &0          &1 
             &0         &0          &0\\
             \cmidrule{1-12}
             \multicolumn{4}{c||}{Average Values}                  &1              &\nicefrac{1}{2} &\nicefrac{1}{2} &0&1
             &0         &0          &0\\
             \bottomrule
        \end{tabular}
        \vskip 5pt{\tiny $\Pi^{\sx}(T:\{1\}\{2\})=0 \hspace{0.1cm} \Pi^{\sx}(T:\{1\})=\nicefrac{1}{2} \hspace{0.1cm} \Pi^{\sx}(T:\{2\})=\nicefrac{1}{2} \hspace{0.1cm} \Pi^{\sx}(T:\{1,2\})=0$}
    \end{table}
    \endgroup
    
    Recall Assumption $(*)$ of Bertschinger et al.~\cite{bertschinger2014quantifying} which states that the unique and shared information should only depend on the marginal distributions $P(S_1, T)$ and $P(S_2, T).$ Finn and Lizier~\cite{finn2018pointwise} showed that all measures which satisfy Assumption $(*)$ result in no unique information, i.e., nonzero redundant information whenever $P(S_1, T)$ is isomorphic to $P(S_2, T).$ The \textsc{PwUnq} distribution falls into this category for which $I_{\min}$~\cite{williams2010nonnegative}, $I_{\red}$~\cite{harder2013bivariate}, $\widetilde{\UI}$~\cite{bertschinger2014quantifying}, and $\mathcal S_{\vk}$~\cite{griffith2014quantifying} do not register unique information of $S_1$ and $S_2$. This is due to Assumption $(*)$ not taking into consideration the pointwise nature of information. Specifically, a measure that satisfies Assumption $(*)$ is agnostic to the fact that at each realization $\{T=j\}$ is uniquely determined by $S_1$ or $S_2$ but never both. On the contrary such a measure registers this as a mixture of shared and synergistic contribution since neither $S_1$ nor $S_2$ can fully determine $\{T=j\}$ on its own but shared they partly determine $\{T=j\}$.
    
    \subsection{Probability distribution XOR}
Using our formulation of  $i^{\sx}_{\cap}$ results in negative local shared information for the classic XOR example. To see this, assume that $S_1$ and $S_2$ are independent, uniformly distributed random bits, and $T=\mathrm{XOR}(S_1,S_2)$, and consider the realization $(s_1,s_2,t)=(1,1,0)$. From Eq.~\eqref{eq:isx} we get
{\small
\[
i^{\sx}_{\cap}(t=0:s_1=1;s_2=1) =\log_2\frac{1/2-1/4}{1-1/4} + \log_2\frac{1}{1/2}<0.
\]
}
We argue that this result reflects that an agent receiving the shared information is misinformed (see, e.g., \cite{finn2018probability} for the concept of misinformation) about $t$. To understand the source of this misinformation, consider that the agent is only provided with the shared information, i.e., the agent knows only that $\mathcal{W}_{s_1,\ldots,s_n}$ is true. This means the agent is being told the following: ``One of the two sources has outcome 1, and we do not know which one.'' This will let the agent predict that the joint realization is one out of three realizations with equal probability:  $(1, 1,0)$,  $(0, 1, 1),$ or $(1, 0, 1)$ (see FIG~\ref{fig:XOR_worked}). Of these three realizations, only one points to the correct target realization $t=0$, while the other two point to the ``wrong'' $t=1$ leading to odds of 1:2 --- whereas $t=0$ and $t=1$ were equally probable before the agent received the shared information from the sources. As a consequence, the local shared information becomes negative~\footnote{Due to $i(t:s_j)=0$ for $j=1,2$ in the XOR example, this negative shared information is then compensated by positive unique information -- however this happens twice, i.e. once for each marginal local mutual information. As a consequence, the synergy is reduced from 1 bit to 1 minus once this unique information. This may seem counter-intuitive when still thinking about the PID atoms as areas, in the sense of ``How come if we subtract two mutual information of zero bit from the joint mutual information of 1 bit, that we do not get 1 bit as a result?''. The key insight is that the two local mutual information terms of zero bit have a negative ``overlap'' with each other, making their sum positive. We simply see here again that the interpretation of PID atoms as (semi-positive) areas has to be given up in the pointwise framework, due to the fact that already the regular local mutual information can be negative.}. Finally, the XOR gate demonstrates an example of negative shared information; we note that in general unique (e.g., table~\ref{tab:rnderr}) and synergistic information can as well be negative.

    \subsection{Probability distribution \textsc{RndErr}}
        Recall \textsc{Rnd}, the redundant probability distribution, where both sources are fully informative about the target and exhibit the same information. More precisely, the \emph{redundant realizations}, $s_1=s_2=t=0$ and $s_1=s_2=t=1,$ are the only two realizations that occur equally likely. Derived from \textsc{Rnd}, the \textsc{RndErr} is a noisy redundant distribution of two sources where one source occasionally misinforms about the target while the other remains fully informative about the target. Moreover, if $S_2$ is the source that occasionally misinforms about the target, then the~\emph{faulty realizations}, namely, $s_2\neq s_1=t=0$ and $s_2\neq s_1=t=1,$ are equally likely, but  less likely than the redundant ones. We stick to the probability masses given in~\cite{finn2018pointwise} for the redundant realizations $\nicefrac{3}{8}$ and for the faulty realizations $\nicefrac{1}{8}$ and speculate that $S_2$ will hold misinformative (negative) unique information about $T.$
        
        For this distribution, our measure results in the following PID: misinformative unique information by $S_2$, informative unique information by $S_1$, informative shared information, and informative synergistic information that balances the misinformation of $S_2$ (see table~\ref{tab:rnderr}).
        \begingroup
        \squeezetable
        \begin{table}
        \caption{\label{tab:rnderr}\textbf{\textsc{RndErr} Example.} \emph{Left:} probability mass diagrams for each realization. \emph{Right:} the pointwise partial information decomposition for the informative and misinformative is evaluated. \emph{Bottom:} the average partial information decomposition. We set $a=\log_2(\nicefrac{8}{5}), b=\log_2(\nicefrac{8}{7}), c=\log_2(\nicefrac{5}{4}), d=\log_2(\nicefrac{7}{4}), e=\log_2(\nicefrac{16}{15}), f=\log_2(\nicefrac{16}{17}),$ and $g=\log_2(\nicefrac{4}{3}).$}
        \centering
        \begin{tabular}{c||cc|c||cccc|cccc}
            \toprule
            \multicolumn{1}{c||}{}    &\multicolumn{3}{c||}{Realization}    &\multicolumn{4}{c|}{$\pi_+^{\sx}$} &\multicolumn{4}{c}{$\pi_-^{\sx}$}\\
            \cmidrule(r){1-1}\cmidrule(lr){2-4}\cmidrule(lr){5-8}\cmidrule(l){9-12}
             $p$                &$s_1$  &$s_2$  &$t$    &$\{1\}\{2\}$               &$\{1\}$
             &$\{2\}$                           &$\{1,2\}$                       &$\{1\}\{2\}$               &$\{1\}$   
             &$\{2\}$                           &$\{1,2\}$\\
             \nicefrac{3}{8}    &0          &0          &0      &$a$  &$c$          &$c$         &$e$    &0                          &0 &$g$           &0\\
             \nicefrac{3}{8}    &1              &1          &1      &$a$  &$c$          &$c$         &$e$  &0                          &0 &$g$           &0\\
             \nicefrac{1}{8}    &0              &1          &0      &$b$  &$d$          &$d$         &$f$   &0                          &0 
             &2           &0\\
             \nicefrac{1}{8}    &1              &0          &1      &$b$  &$d$          &$d$         &$f$   &0                          &0 
             &2           &0\\
             \cmidrule{1-12}
             \multicolumn{4}{c||}{Average Values}                  &0.557                    &0.443    
             &0.443                             &0.367                          &0                          &0         
             &0.811     &0\\
             \bottomrule
        \end{tabular}
        \vskip 5pt
        {\tiny $\Pi^{\sx}(T:\{1\}\{2\})=0.557 \hspace{0.1cm} \Pi^{\sx}(T:\{1\})=0.443 \hspace{0.1cm} \Pi^{\sx}(T:\{2\})=-0.367 \hspace{0.1cm} \Pi^{\sx}(T:\{1,2\})=0.367$}
        \end{table}
        \endgroup
    \subsection{Probability distribution \textsc{XorDuplicate}}
    In this distribution, we extend the \textsc{Xor} distribution by adding a third source $S_3$ such that (i) $S_3$ is a copy of any of the two original sources and (ii) $S_3$ does not have an additional effect on the target, e.g., if $S_3$ is a copy of $S_1$ then $T :=\textsc{Xor}(S_1, S_2)= \textsc{Xor}(S_2, S_3).$ Let $S_1$ and $S_2$ be two independent, uniformly distributed random bits, $S_3$ be a copy of $S_1,$ and $T=\textsc{Xor}(S_1, S_2).$ This distribution $(S_1, S_2, S_3, T)$ is called \textsc{XorDuplicate} where the only nonzero realizations are $(0,0,0,0), (0,1,0,1), (1,0,1,1), (1,1,1,0).$ 
    
    The key point is that the target $T$ in the classical $\textsc{Xor}$ is specified only by $(S_1, S_2)$, whereas in \textsc{XorDuplicate} the target is equally specified by the coalitions $(S_1,S_2)$ and $(S_2,S_3)$. This means that the synergy $\Pi^{\sx}(T:\{1,2\})$ in \textsc{Xor} should be captured by the term $\Pi^{\sx}(T:\{1,2\}\{2,3\})$ in \textsc{XorDuplicate}. 
    
    The \textsc{XorDuplicate} distribution was suggested by Griffith et al.~\cite{griffith2014quantifying}. The authors speculated that their definition of synergy $\mathcal S_{\vk}$ must be invariant to duplicates for this distribution, $\Pi^{\sx}(T:\{1,2\}\{2,3\}) = \Pi^{\sx}(T:\{1,2\}),$ since the mutual information is invariant to duplicates, $I(T: S_1, S_2, S_3) = I(T:, S_1, S_2).$ Also, they proved that $\mathcal S_{\vk}$ is invariant to duplicates in general~\cite{griffith2014quantifying}. 
    
    For the shared exclusion measure $i_\cap^{\sx}$, it is evident that the invariant property will hold since the shared information is indeed a mutual information and it is easy to see that $i_\cap^{\sx}(t: s_1; s_2; s_3) = i_\cap^{\sx}(t: s_1; s_2).$ In fact, we show below that all the PID terms are invariant to the duplication. That is, the unique information of $S_2$ is invariant and captured by $\Pi^{\sx}(T:\{2\})$. Also, the unique information of $S_1$ is invariant but is captured by the atom $\Pi^{\sx}(T:\{1\}\{3\})$ since it is shared information by $S_1$ and $S_3$ as $S_3$ is a copy of $S_1$. Finally, the synergistic information is invariant, however, it is captured by $\Pi^{\sx}(T:\{1,2\}\{2,3\})$ since the coalitions $(S_1,S_2)$ and $(S_2,S_3)$ can equally specify the target. These claims are shown below by replacing $s_3$ by $s_1$ and applying the monotonicity axiom~\ref{ax:mono} on $i_\cap^{\sx+}$ and $i_\cap^{\sx-}.$ Note that due to symmetry all the realizations have equal PID terms and the difference between the informative and misinformative is computed implicitly. 
    
    For any $(t,s_1,s_2,s_3)$ with nonzero probability mass, we have
    {\small
    \begin{align*}
        &i_\cap^{\sx}(t: s_1; s_2; s_3)= i_\cap^{\sx}(t: s_1; s_2)= i_\cap^{\sx}(t: s_2; s_3)= -0.5849\\
        &i_\cap^{\sx}(t: s_1; s_3)= i_\cap^{\sx}(t: s_1)  = i_\cap^{\sx}(t:s_3)= 0
    \end{align*}
    }
    implying that
    {\small
    \begin{align*}
        &\pi^{\sx}(t:\{1\}\{2\})= \pi^{\sx}(t:\{2\}\{3\}) = 0\\
        &\pi^{\sx}(t:\{1\}\{3\})= -\pi^{\sx}(t:\{1\}\{2\}\{3\}) = 0.5849.
    \end{align*}
    }
    But, $i_\cap^{\sx}(t: s_2; s_1, s_3)=i_\cap^{\sx}(t: s_2; s_3)=i_\cap^{\sx}(t: s_1; s_2)=-0.5849$ meaning that
    {\small
    \begin{align*}
        &\pi^{\sx}(t:\{2\}\{1,3\})=0\\
        &\pi^{\sx}(t:\{2\})=i^{\sx}(t: s_2) - i^{\sx}(t: s_2; s_1, s_3)= 0.5849.
    \end{align*}
    }
    Furthermore,
    {\small
    \begin{align*}
        \small
        &i_\cap^{\sx}(t: s_1; s_2, s_3)=i_\cap^{\sx}(t: s_1; s_1, s_2) = i_\cap^{\sx}(t: s_1) =0\\
        &i_\cap^{\sx}(t: s_3; s_1, s_2)= i_\cap^{\sx}(t: s_1; s_1, s_2) = i_\cap^{\sx}(t: s_1) =0\\
        &i_\cap^{\sx}(t: s_1,s_2; s_1, s_3; s_2,s_3)= i_\cap^{\sx}(t: s_1) =0
    \end{align*}
    }
    and so
    {\small
    \begin{align*}
        \small
        &\pi^{\sx}(t:\{1\}\{2,3\})=\pi^{\sx}(t:\{3\}\{1,2\})=0\\
        &\pi^{\sx}(t:\{1,2\}\{2,3\})=0.415\\
        &\pi^{\sx}(t:\{1,2\}\{1,3\})=\pi^{\sx}(t:\{1,2\}\{2,3\}) = 0.
    \end{align*}
    }
    Finally, we have
    {\small
    \begin{align*}
        \small
        &i_\cap^{\sx}(t: s_1, s_2, s_3)=i_\cap^{\sx}(t: s_1, s_2) = i_\cap^{\sx}(t: s_2, s_3) =1\\
        &i_\cap^{\sx}(t: s_1, s_3)=0
    \end{align*}
    }
    and thus it easy to see that their corresponding atoms are zero. 
    
    \subsection{Probability distribution 3-bit parity}
        Let $S_1,$ $S_2$ and $S_3$ be independent, uniformly distributed random bits, and $T=\sum_{i=1}^3 S_i \Mod 2$. This distribution is the 3-bit parity, where $T$ indicates the parity of the total number of 1-bits in $(S_1, S_2, S_3)$. Note that all possible realizations occur with probability $\nicefrac{1}{8}$ and result in the same PPID as well as the average PID due to the symmetry of the variables. Table~\ref{tab:3-bit-parity} shows the informative and misinformative component, and their difference for any realization. In addition, we illustrate in Figure~\ref{fig:4bit_parity_worked} the results of $\pi^{\sx}(t: \{1,2\}\{3,4\})$ for the 4-bit parity distribution.
        
        \begin{table*}
        \caption{\label{tab:3-bit-parity}\textbf{3-bit Parity Example.} \emph{Left:} the average \emph{informative} partial information decomposition  is evaluated. \emph{Right:} the average \emph{misinformative} partial information decomposition is evaluated. \emph{Center:} the average partial information decomposition is evaluated.}
        \centering\tiny
        \begin{tabular}{cccccccc}
            \toprule
            \multicolumn{4}{c}{$\Pi_+^{\sx}$}                &   \multicolumn{4}{c}{$\Pi_-^{\sx}$}\\
            \cmidrule(r){1-4}                                   \cmidrule(l){5-8}
            &$\{1,2,3\}$&&                                    &   &$\{1,2,3\}$&&\\         
            &0.2451   &&                                    &   &0        &&\\
            \cmidrule(r){1-4}                                   \cmidrule(l){5-8}
            $\{1,2\}$&$\{1,3\}$&$\{2,3\}$&                     &   $\{1,2\}$&$\{1,3\}$&$\{2,3\}$&\\         
            0.1699  &0.1699  &0.1699  &                     &   0       &0       &0       &\\
            \cmidrule(r){1-4}                                   \cmidrule(l){5-8}
            $\{1,2\}\{1,3\}$&$\{1,2\}\{2,3\}$&$\{1,3\}\{2,3\}$&   &   $\{1,2\}\{1,3\}$&$\{1,2\}\{2,3\}$&$\{1,3\}\{2,3\}$&\\         
            0.0931        &0.0931        &0.0931        &   &   0             &0             &0             &\\
            \cmidrule(r){1-4}                               \cmidrule(l){5-8}
            $\{1\}$&$\{2\}$&$\{3\}$&$\{1,2\}\{1,3\}\{2,3\}$    &   $\{1\}$&$\{2\}$&$\{3\}$&$\{1,2\}\{1,3\}\{2,3\}$\\         
            0.3219 &0.3219 &0.3219 &0.0182                  &   0.3219 &0.3219 &0.3219 &0.2451\\
            \cmidrule(r){1-4}                               \cmidrule(l){5-8}
            $\{1\}\{2,3\}$&$\{2\}\{1,3\}$&$\{3\}\{1,2\}$&      &   $\{1\}\{2,3\}$ &$\{2\}\{1,3\}$&$\{3\}\{1,2\}$&\\         
            0.0406        &0.0406       &0.0406       &     &   0.1699 &0.1699&0.1699&\\
            \cmidrule(r){1-4}                               \cmidrule(l){5-8}
            $\{1\}\{2\}$&$\{1\}\{3\}$&$\{2\}\{3\}$&         &   $\{1\}\{2\}$    &$\{1\}\{3\}$&$\{2\}\{3\}$&\\         
            0.2224       &0.2224      &0.2224      &        &   0.415           &0.415&0.415&\\
            \cmidrule(r){1-4}                               \cmidrule(l){5-8}
            &$\{1\}\{2\}\{3\}$&&                          &   &$\{1\}\{2\}\{3\}$  &&\\         
            &0.1926           &&                          &   &0                  &&\\
            \toprule
            &&\multicolumn{4}{c}{$\Pi^{\sx}$}\\
            \cmidrule(r){3-6}
            &&& $\{1,2,3\}$&&                                   &\\
            &&&0.2451&&                                       &\\
            \cmidrule(r){3-6}
            && $\{1,2\}$ &$\{1,3\}$&$\{2,3\}$&                   &\\
            &&0.1699 &0.1699&0.1699&                          &\\
            \cmidrule(r){3-6}
            &&$\{1,2\}\{1,3\}$ &$\{1,2\}\{2,3\}$&$\{1,3\}\{2,3\}$&  &\\
            &&0.0931 &0.0931&0.0931&                          &\\
            \cmidrule(r){3-6}
            &&$\{1\}$& $\{2\}$& $\{3\}$&$\{1,2\}\{1,3\}\{2,3\}$  &\\
            &&0.3219&0.3219&0.3219&-0.2268                    &\\
            \cmidrule(r){3-6}
            &&$\{1\}\{2,3\}$ &$\{2\}\{1,3\}$&$\{3\}\{1,2\}$&     &\\
            &&-0.1293 &-0.1293&-0.1293&                       &\\
            \cmidrule(r){3-6}
            &&$\{1\}\{2\}$ &$\{1\}\{3\}$&$\{2\}\{3\}$&        &\\
            &&-0.1926&-0.1926&-0.1926&                        &\\
            \cmidrule(r){3-6}
            &&&$\{1\}\{2\}\{3\}$&&                            &\\
            &&&0.1926                   &&                    &\\
            \bottomrule
        \end{tabular}
        \end{table*}

\begin{figure}
    \centering
    \includegraphics[width=0.35\textwidth]{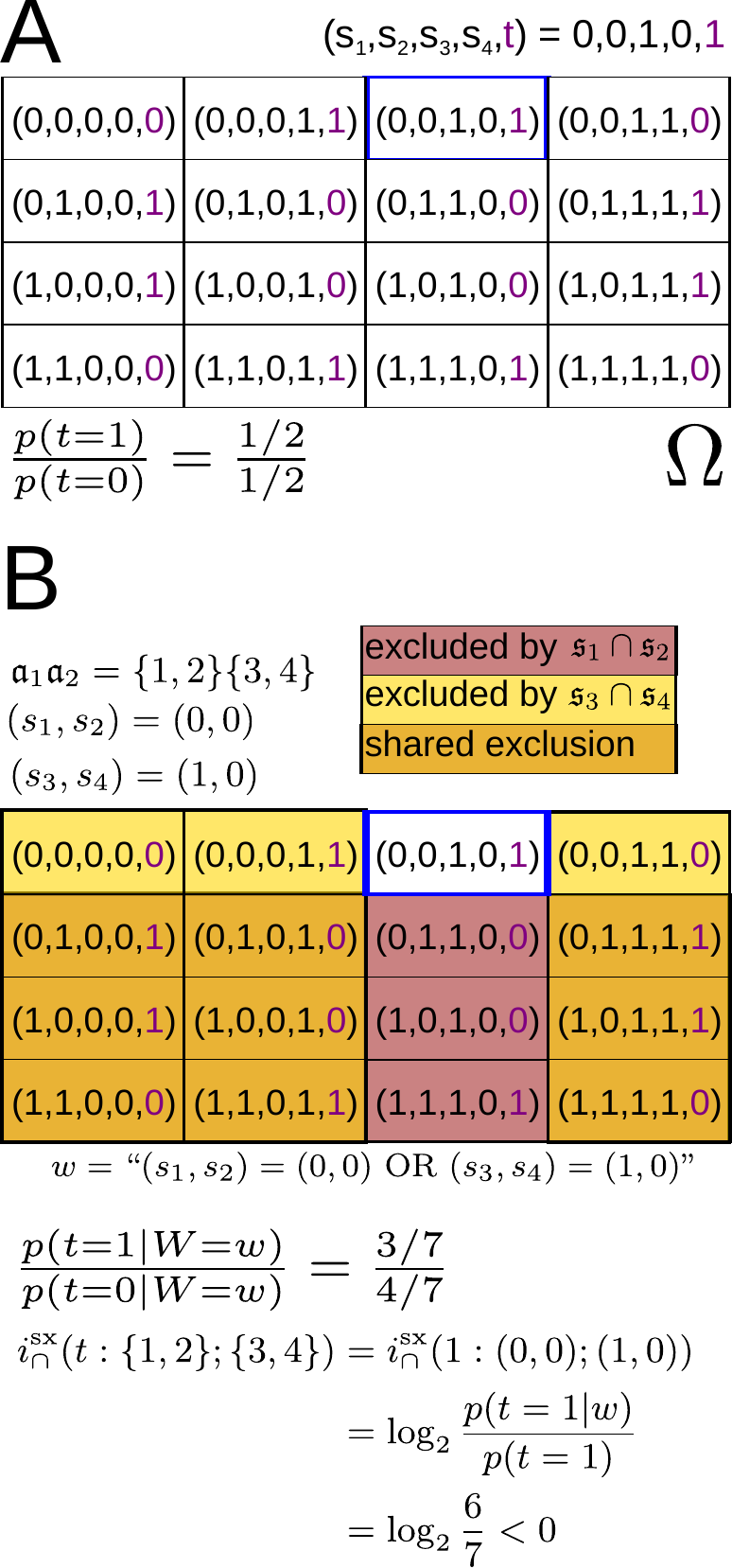}
    \caption{\label{fig:4bit_parity_worked}\textbf{Worked example of $i_\cap^{\sx}$ for a four source-variables case.}  We evaluate the shared information $i^{\sx}_\cap(t:\bfa_1;\bfa_2)$ with $\bfa_1=\{1,2\},$ $\bfa_2=\{3,4\}$, $s=(s_1,s_2,s_3,s_4) = (0,0,1,0),$ and $t=\mathrm{Parity}(s)=1$. (A) Sample space -- the relevant event is marked by the blue (gray) outline. (B) exclusions induced by the two collections of source realization indices $\bfa_1$ (brown (dark gray)), $\bfa_2$ (yellow (light gray)), and the shared exclusion relevant for $i^{\sx}_\cap$  (gold (gray)). After removing and rescaling, the probability for the target event that was actually realized, i.e., $t=1$, is reduced from $1/2$ to $3/7$. Hence the shared exclusion leads to negative shared information. Hence, $\pi^{\sx}(t:\{1,2\}\{3,4\})=-0.0145 \bit.$}
\end{figure}

\appendix
\section{\label{apx:sec:proofs}Lattice structure: supporting proofs and further details}
We show how the redundancy lattice can be endowed by $i_\cap^{\sx\pm}$ separately to obtain consistent PID terms $\pi^{\sx}_\pm$. Subsequently, we show that $\pi^{\sx}_\pm$ are nonnegative and thus the PID terms are meaningful.

\subsection{\label{apx:subsec:inf-misinf}Informative and misinformative lattices}
We start by explaining the redundancy lattice proposed by Williams and Beer. Then, we explain in detail how to apply $i_\cap^{\sx}$ to obtain a PID. 

As explained in section~\ref{sec:lattice}, there is a one-to-one correspondence between the PID terms and the antichain combinations. Since $i^{\sx}_\cap$ is defined locally, then for every realization the antichain combinations are associated to the source events. This way the PPID terms are computed and their average amount to the desired PID terms.

We use specific index sets and call them \emph{antichains} to represent the antichain combination since antichain combinations are uniquely identified by the indices of their source events. For instance, an antichain $\alpha = \{\bfa_1,\ldots,\bfa_n\}$ such that $\bfa_i \subset[n]$ where $[n]$ is the index set of the realization $s=(s_1,\ldots,s_n).$ Moreover, $\bfa_i\in\alpha$ should be pairwise incomparable under inclusion since antichain combinations are as such (see Section~\ref{sec:lattice}). E.g., $\{\{1,2\},\{1,3\}\}$ represents the source event $(\mfs_1\cap\mfs_2)\cup(\mfs_1\cap\mfs_3)$ and the combination of $(s_1,s_2)$ and $(s_1,s_3)$. 

Let $\mathscr A([n])$ be the set of all antichains; Crampton et al.~\cite{crampton2000embedding} showed that there exists the following partial ordering over $\mathscr A([n])$: 
\[
    \alpha\preceq\beta\Leftrightarrow\forall~\bfb\in\beta, \exists~\bfa\in\alpha\mid \bfa\subseteq\bfb \quad \forall~\alpha,\beta\in\mathscr A ([n]).
\]
This partial ordering $\preceq$ implies that any $\alpha,\beta\in\mathscr A([n])$ have an infimum $\alpha\wedge\beta\in\mathscr A([n])$ and a supremum $\alpha\vee\beta\in\mathscr A([n])$ and so $\left<\mathscr{A}([n]), \preceq\right>$ is called a lattice. Now when endowing $\left<\mathscr{A}([n]), \preceq\right>$ with a function $f$ (say a shared information) such that $f(\alpha)=\sum_{\beta\preceq\alpha}\pi(\beta)$ where $\pi(\beta)$ are desired quantities (say PID terms) that have a one-to-one correspondence with $\beta\in\mathscr A([n])$, then we can compute these $\pi$ using $f$. Hence, we reduced the problem of defining different conceptual quantities that each antichain represents by defining a single conceptual quantity for each antichain that is the shared mutual information.

Williams and Beer coined this idea of endowing $\left<\mathscr{A}([n]), \preceq\right>$ with a redundancy measure $I_\cap$ and hence the name ``redundancy lattice.'' For this, they had a set of axioms that ensured (i) the one-to-one correspondence between $\mathscr A([n])$ and the PID terms and (ii) that $I_\cap(\alpha)=\sum_{\beta\preceq\alpha}\Pi(\beta).$ However, their definition was not local (for every realization) and thus Finn and Lizier~\cite{finn2018pointwise} adapted the axioms for the local case. However, the local shared measure $i_\cap$ can take negative values and the problem persists upon averaging. Thus, they proposed to decompose $i_\cap = i_\cap^+ - i_\cap^-$ where $i_\cap^\pm$ take only nonnegative terms and can be interpreted as informative and misinformative components of $i_\cap.$ Altogether, for each realization we will endow $\left<\mathscr{A}([n]), \preceq\right>$ with $i_\cap^{\sx+}$ (\emph{informative lattice}) and $i_\cap^{\sx-}$ (\emph{misinformative lattice}) individually to obtain $\pi^{\sx}_+$ and $\pi^{sx}_-$ PPID terms. 

First, for any $\alpha\in\mathscr A[n]$, we define $i_\cap^{\sx\pm}$ as follows: 
\[
\begin{split}
    \mP(\alpha) &= \mP(\bigcup_{\bfa\in\alpha}\bigcap_{i\in\bfa}\mfs_i)\\
    \mP(\mft,\alpha) &= \mP(\bigcup_{\bfa\in\alpha}\bigcap_{i\in\bfa}(\mft\cap\mfs_i))\\
    i^ {\sx}_{\cap}(t:\alpha)    &= \log_2\frac{1}{\mP(\alpha)} - \log_2\frac{\mP(\mft)}{\mP(\mft\cap\alpha)}\\
                                &= i^ {\sx+}_{\cap}(t:\alpha) - i^ {\sx-}_{\cap}(t:\alpha).
\end{split}
\]
Now to show that this endowing of $i_\cap^{\sx\pm}$ is consistent, we prove Theorem~\ref{thm:axioms}, that shows that $i_\cap^{\sx\pm}$ satisfy the PPID axioms. 

\begin{proof}[proof of Theorem~\ref{thm:axioms}]
    By the symmetry of intersection, $i^ {\sx\pm}_{\cap}$ defined in~\eqref{eq:isx_decomposed} satisfy the symmetry Axiom~\ref{ax:sym}. For any collection $\SE a,$ using~\eqref{eq:isx_decomposed}, the informative and misinformative shared information are
    \begin{align*}
        i^{\sx+}_\cap(t:\SE a) &= \log_2\frac{1}{p(\bfa)} = h(\bfa)\\
        i^{\sx-}_\cap(t:\SE a) &= \log_2\frac{p(t)}{p(t,\bfa)} = h(\bfa \mid t).
    \end{align*} 
    and so they satisfy Axiom~\ref{ax:self}. For Axiom~\ref{ax:mono}, note that
    \[
        \mP(\bmfa_1, \bmfa_2, \ldots, \bmfa_m, \bmfa_{m+1})\leq \mP(\bmfa_1,\bmfa_2, \ldots, \bmfa_m)
    \]
    This implies that $i^ {\sx\pm}_{\cap}$ decrease monotonically if joint source realizations are added, where equality holds if there exists $i\in[m]$ such that $\bmfa_{m+1}\supseteq\bmfa_i$ , i.e., if there exists $i\in[m]$ such that $\mfa_{m+1}\subseteq\mfa_i \Leftrightarrow \bfa_i\subseteq\bfa_{m+1}.$ 
\end{proof}
Then, we assume that
\begin{equation}\label{eq:sx-lat}
   i^{\sx\pm}_{\cap}(t:\alpha) = \sum_{\beta\preceq\alpha}\pi_\pm^{\sx}(t:\beta)\quad\forall~\alpha,\beta\in\mathscr A([n]).
\end{equation}
Note that, this assumption is logically sound and is discussed thoroughly in~\cite{gutknecht2020bits}. Finally, to obtain $\pi^{\sx}_\pm,$ we show that Eq.~\eqref{eq:sx-lat} is invertible via a so-called M\"{o}bius inversion given by the following theorem. \begin{theorem}\label{thm:moeb-inv}
    Let $i^ {\sx\pm}_{\cap}$ be measures on the redundancy lattice, then we have the following closed form for each atom $\pi_\pm^{\sx}$:
    \begin{equation}\label{eq:moeb-inv}
        \pi_\pm^{\sx}(t:\alpha) = i^{\sx\pm}_{\cap}(t:\alpha) - \sum_{\emptyset\neq\mathcal B\subseteq\alpha^-}(-1)^{|\mathcal B|-1}i^{\sx\pm}_{\cap}(t:\bigwedge\mathcal B).
    \end{equation}
\end{theorem}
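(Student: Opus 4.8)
The plan is to establish \eqref{eq:moeb-inv} directly from the defining relation \eqref{eq:sx-lat}, $i^{\sx\pm}_{\cap}(t:\alpha)=\sum_{\beta\preceq\alpha}\pi^{\sx}_\pm(t:\beta)$, by means of an inclusion--exclusion argument over the principal ideals generated by the children of $\alpha$, rather than by invoking the abstract Möbius function of $\mathscr A([n])$. Since $i^{\sx\pm}_\cap(t:\alpha)=\pi^{\sx}_\pm(t:\alpha)+\sum_{\beta\prec\alpha}\pi^{\sx}_\pm(t:\beta)$, it suffices to prove
\[
\sum_{\beta\prec\alpha}\pi^{\sx}_\pm(t:\beta)=\sum_{\emptyset\neq\mathcal B\subseteq\alpha^-}(-1)^{|\mathcal B|-1}\,i^{\sx\pm}_{\cap}\big(t:\textstyle\bigwedge\mathcal B\big),
\]
after which subtracting this from $i^{\sx\pm}_\cap(t:\alpha)$ yields \eqref{eq:moeb-inv}.

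The first step is a purely order-theoretic observation: the strict lower set of $\alpha$ decomposes as the union of the principal ideals of its children, i.e.\ $\{\beta\mid\beta\prec\alpha\}=\bigcup_{\gamma\in\alpha^-}\{\beta\mid\beta\preceq\gamma\}$. The inclusion ``$\supseteq$'' is immediate since $\gamma\prec\alpha$ for each child. For ``$\subseteq$'' I would use that $\mathscr A([n])$ is finite, so any $\beta\prec\alpha$ extends to a maximal chain from $\beta$ to $\alpha$ whose penultimate element is a lower cover $\gamma\in\alpha^-$ satisfying $\beta\preceq\gamma$. I expect this covering step to be the main point requiring care, as it is precisely where finiteness of the lattice is genuinely used; without it one cannot guarantee that every element strictly below $\alpha$ sits beneath some child.

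The second step applies the weighted inclusion--exclusion identity to the sum of $\pi^{\sx}_\pm(t:\cdot)$ over this union. Using the defining property of the meet, namely that $\beta\preceq\gamma$ for all $\gamma\in\mathcal B$ if and only if $\beta\preceq\bigwedge\mathcal B$, the intersection of the principal ideals of the members of $\mathcal B$ is exactly the principal ideal of $\bigwedge\mathcal B$, so
\[
\sum_{\beta\prec\alpha}\pi^{\sx}_\pm(t:\beta)=\sum_{\emptyset\neq\mathcal B\subseteq\alpha^-}(-1)^{|\mathcal B|-1}\sum_{\beta\preceq\bigwedge\mathcal B}\pi^{\sx}_\pm(t:\beta).
\]
By \eqref{eq:sx-lat} the inner sum equals $i^{\sx\pm}_\cap(t:\bigwedge\mathcal B)$, and since $\bigwedge\mathcal B\preceq\gamma\prec\alpha$ for any $\gamma\in\mathcal B$, each meet lies strictly below $\alpha$, so these terms are well defined. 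Substituting and rearranging with $i^{\sx\pm}_\cap(t:\alpha)=\pi^{\sx}_\pm(t:\alpha)+\sum_{\beta\prec\alpha}\pi^{\sx}_\pm(t:\beta)$ completes the proof. The argument is identical for the $+$ and $-$ components, since it uses only the defining summation relation \eqref{eq:sx-lat} and the lattice structure, not the explicit form of $i^{\sx\pm}_\cap$.
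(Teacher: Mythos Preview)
Your argument is correct. The paper itself does not give an explicit proof but simply defers to \cite[Theorem~A1]{finn2018pointwise}, so your self-contained inclusion--exclusion argument over the principal ideals of the children is a genuine addition rather than a duplication. The two key steps---that $\{\beta:\beta\prec\alpha\}=\bigcup_{\gamma\in\alpha^-}\!\downarrow\!\gamma$ by finiteness of $\mathscr A([n])$, and that $\bigcap_{\gamma\in\mathcal B}\!\downarrow\!\gamma=\downarrow\!\bigl(\bigwedge\mathcal B\bigr)$ by the universal property of the meet---are exactly what is needed, and the weighted inclusion--exclusion identity for sums then reduces the right-hand side of \eqref{eq:moeb-inv} to $i^{\sx\pm}_\cap(t:\alpha)-\sum_{\beta\prec\alpha}\pi^{\sx}_\pm(t:\beta)$ via \eqref{eq:sx-lat}. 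One small cosmetic point: your remark that ``these terms are well defined'' because $\bigwedge\mathcal B\prec\alpha$ is not strictly needed, since $i^{\sx\pm}_\cap$ is defined on all of $\mathscr A([n])$; but noting that each meet lies strictly below $\alpha$ is still worthwhile, as it makes clear that the formula expresses $\pi^{\sx}_\pm(t:\alpha)$ purely in terms of values of $i^{\sx\pm}_\cap$ at or below $\alpha$.
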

\noindent The proof of the above theorem follows from that of~\cite[Theorem A1]{finn2018pointwise}. 

\subsection[]{\label{apx:subsec:nonneg}Nonnegativity of $\pi_\pm^{\sx}$}
    In order for our information decomposition to be interpretative, the informative and misinformative atoms, $\pi_\pm^{\sx}$, must be nonnegative. First, we recall these results from convex analysis that will come in handy later. 
    \begin{theorem}[Theorem 2.67~\cite{ruszczynski2006nonlinear}]\label{thm:1st-ord-convex}
        Let $f:\RR^n\to\RR$ be a continuously differentiable function. Then, $f$ is convex if and only if for all $x$ and $y$
        \[
            f(y)\ge f(x) + \nabla^T f(x)(y-x).
        \]  
    \end{theorem}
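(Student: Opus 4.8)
The plan is to prove the two implications of the equivalence separately, in each case using continuous differentiability as the bridge between the chord inequality that defines convexity and the tangent‑plane inequality in the statement.

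First I would treat the forward direction, assuming $f$ convex and fixing arbitrary $x,y\in\RR^n$. Since $(1-t)x+ty = x + t(y-x)$, convexity gives, for every $t\in(0,1)$,
\[
    f\bigl(x + t(y-x)\bigr) \le (1-t)\,f(x) + t\,f(y),
\]
which rearranges to
\[
    \frac{f\bigl(x + t(y-x)\bigr) - f(x)}{t} \le f(y) - f(x).
\]
The decisive step is to let $t\to 0^+$: by differentiability of $f$ the left‑hand side converges to the directional derivative $\nabla^T f(x)(y-x)$, so passing to the limit yields $f(y)\ge f(x) + \nabla^T f(x)(y-x)$, as required.

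For the converse I would assume the gradient inequality holds at every base point and recover the defining convexity inequality. Fixing $x,y$ and $\lambda\in[0,1]$, I set $z=\lambda x + (1-\lambda)y$ and apply the hypothesis at the base point $z$ to each of $x$ and $y$:
\[
    f(x) \ge f(z) + \nabla^T f(z)(x - z), \qquad f(y) \ge f(z) + \nabla^T f(z)(y - z).
\]
Taking $\lambda$ times the first inequality plus $(1-\lambda)$ times the second collapses the linear term, because $\lambda(x-z)+(1-\lambda)(y-z)=z-z=0$. What remains is $\lambda f(x)+(1-\lambda)f(y)\ge f(z)=f\bigl(\lambda x+(1-\lambda)y\bigr)$, which is exactly the convexity of $f$.

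The main obstacle is the limit step in the forward direction: one must justify that the difference quotient along the direction $y-x$ converges to $\nabla^T f(x)(y-x)$. This is precisely where \emph{continuous} differentiability (rather than mere existence of partial derivatives) enters, since it guarantees via the chain rule that $t\mapsto f\bigl(x+t(y-x)\bigr)$ is differentiable with derivative $\nabla^T f\bigl(x+t(y-x)\bigr)(y-x)$, so that the one‑sided limit at $t=0$ is $\nabla^T f(x)(y-x)$. The converse direction is purely algebraic once the gradient inequality is in hand and presents no analytic difficulty.
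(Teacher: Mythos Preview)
Your proof is correct and is the standard textbook argument for the first-order characterization of convexity. Note, however, that the paper does not actually prove this theorem: it is quoted verbatim as Theorem~2.67 from Ruszczy\'nski's \emph{Nonlinear Optimization} and used as a black-box tool in the subsequent Proposition~\ref{prop:conv-dec-gen} and Lemma~\ref{lem:log-ineq}. So there is no ``paper's own proof'' to compare against---your derivation is precisely the kind of argument one finds in the cited reference, and it would serve perfectly well if the paper had chosen to include a proof.
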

    \begin{proposition}\label{prop:conv-dec-gen}
        Let $f:\RR^n\to\RR$ be a continuously differentiable convex function and $y_0 - x_0= c\mathbf{1}$ where $c\ge 0$. If $f(x_0)\ge f(y_0)$, then
        \[
            -\sum_i \frac{\partial f}{\partial x_i}(y_0) \le -\sum_i \frac{\partial f}{\partial x_i}(x_0). 
        \]
    \end{proposition}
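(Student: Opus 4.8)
The plan is to read the claim as the statement that the directional derivative of $f$ along the fixed direction $\mathbf{1}=(1,\ldots,1)$ does not decrease as we move along that direction. Since $\sum_i \frac{\partial f}{\partial x_i}(x)=\nabla^T f(x)\mathbf{1}$ and $y_0=x_0+c\mathbf{1}$ with $c\ge 0$, the asserted inequality is equivalent to $\nabla^T f(y_0)\mathbf{1}\ge\nabla^T f(x_0)\mathbf{1}$, i.e.\ to the directional derivative along $\mathbf{1}$ being nondecreasing from $x_0$ to $y_0$. This is precisely the one-dimensional shadow of convexity along the line $t\mapsto x_0+t\mathbf{1}$, so the natural engine is the first-order characterization of convexity already available as Theorem~\ref{thm:1st-ord-convex}.

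First I would apply Theorem~\ref{thm:1st-ord-convex} in both directions, once with $(x,y)=(x_0,y_0)$ and once with the arguments swapped:
\begin{align*}
 f(y_0) &\ge f(x_0) + \nabla^T f(x_0)(y_0-x_0),\\
 f(x_0) &\ge f(y_0) + \nabla^T f(y_0)(x_0-y_0).
\end{align*}
Substituting $y_0-x_0=c\mathbf{1}$ rewrites these as $c\,\nabla^T f(x_0)\mathbf{1}\le f(y_0)-f(x_0)$ and $c\,\nabla^T f(y_0)\mathbf{1}\ge f(y_0)-f(x_0)$. Chaining through the common middle term $f(y_0)-f(x_0)$ then yields $c\,\nabla^T f(y_0)\mathbf{1}\ge c\,\nabla^T f(x_0)\mathbf{1}$, from which, for $c>0$, dividing by $c$ and multiplying by $-1$ gives the claim $-\sum_i \frac{\partial f}{\partial x_i}(y_0)\le -\sum_i \frac{\partial f}{\partial x_i}(x_0)$.

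I would finish by disposing of the degenerate case $c=0$ separately, where $x_0=y_0$ and the inequality holds with equality, since one cannot divide by $c$ there. Equivalently, adding the two first-order inequalities collapses them to the monotonicity of the gradient, $(\nabla f(y_0)-\nabla f(x_0))^T(y_0-x_0)\ge 0$, and substituting $y_0-x_0=c\mathbf{1}$ with $c\ge 0$ again isolates $\sum_i \frac{\partial f}{\partial x_i}(y_0)\ge\sum_i \frac{\partial f}{\partial x_i}(x_0)$.

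I do not expect a genuine obstacle here: the conclusion in fact follows from convexity together with $c\ge 0$ alone. The hypothesis $f(x_0)\ge f(y_0)$ — equivalently, that the intermediate quantity $f(y_0)-f(x_0)$ is nonpositive — is not actually consumed by the chaining above; its role appears to be to record the regime in which this comparison of gradient sums is later fed into the nonnegativity argument for the atoms $\pi^{\sx}_\pm$. The only point demanding a line of care is therefore the $c=0$ boundary case noted above.
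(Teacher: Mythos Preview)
Your proof is correct and follows essentially the same route as the paper: apply Theorem~\ref{thm:1st-ord-convex} in both directions to obtain the sandwich $-\nabla^T f(y_0)(y_0-x_0)\le f(x_0)-f(y_0)\le -\nabla^T f(x_0)(y_0-x_0)$, substitute $y_0-x_0=c\mathbf{1}$, and cancel $c$. You are in fact slightly more careful than the paper, which silently divides by $c$ without isolating the $c=0$ case, and your observation that the hypothesis $f(x_0)\ge f(y_0)$ is never used is correct --- the paper's own proof does not invoke it either.
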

    \begin{proof}
        For any $x,y\in\RR^n$, using theorem~\ref{thm:1st-ord-convex} by interchanging the roles of $x$ and $y$,
       \begin{equation*}
            -\nabla^T f(y) (y-x) \le f(x) - f(y) \le -\nabla^T f(x)(y-x).
        \end{equation*}
        Now consider $x_0,y_0\in\RR^n$ such that $y_0 - x_0 = c\mathbf{1},$ then
        \[
            \begin{split}
            -c\nabla^Tf(y_0)\mathbf{1}&\le -c\nabla^Tf(x_0)\mathbf{1}\\
                -\sum_i \frac{\partial f}{\partial x_i}(y_0) &\le -\sum_i \frac{\partial f}{\partial x_i}(x_0).
            \end{split}
        \]
    \end{proof}
    We write down the proof of theorem~\ref{thm:mono-antichain} and then show that $i_\cap^{\sx\pm}$ are nonnegative.  
    
    \begin{proof}[proof of theorem~\ref{thm:mono-antichain}]
    Let $\alpha, \beta \in \mathscr{A}([n])$ and $\alpha \preceq \beta$. Then $\alpha$ and $\beta$ are of the form $\alpha = \{\mathbf{a}_1,....,\mathbf{a}_{k_\alpha}\}$ and $\beta = \{\mathbf{b}_1,....,\mathbf{b}_{k_\beta}\}$. Because $\alpha \preceq \beta$ there is a function $f: \beta \rightarrow \alpha$ such that $f(\mathbf{b}) \subseteq \mathbf{b}$ \footnote{This function does not have to be surjective: Suppose $\alpha = \{\{1\}, \{2,4\}, \{3\}\}$ and $\beta = \{\{1,2,3,4\}\}$. Then necessarily two sets in $\alpha$ will not be in the image of $f.$ It also does not have to be injective. Consider $\alpha = \{1\}$ and $\beta = \{ \{1,2\}, \{1,3\}\}$. Then both elements of $\beta$ have to be mapped to the only element of $\alpha$}. Now we have for all $\mathbf{b} \in \beta$
    \begin{equation*}
        \bigcap_{i \in \mathbf{b}} \mfs_i \subseteq \bigcap_{i \in f(\mathbf{b})}\mfs_i
    \end{equation*}
    Hence,
    \begin{equation}\label{eq:prob-alpha-beta}
        \small
        \begin{split}
            \mathbb{P}(\beta)   &= \mathbb{P}\left(\bigcup_{\mathbf{b} \in \beta} \bigcap_{i \in \mathbf{b}}\mfs_i\right) \leq \mathbb{P}\left(\bigcup_{\mathbf{b} \in \beta} \bigcap_{i \in f(\mathbf{b})}\mfs_i\right)\\
                                &\leq  \mathbb{P}\left(\bigcup_{\mathbf{a} \in \alpha} \bigcap_{i \in \mathbf{a}}\mfs_i\right) = \mathbb{P}(\alpha).
        \end{split}
    \end{equation}
    The last inequality is true because the term on its L.H.S.~is the probability of a union of intersections related to  collections $\mathbf{a} \in \alpha$ (the $f(\mathbf{b})$), i.e., it is the probability of a union of events of the type $\bigcap_{i \in \mathbf{a}} \mfs_i$. The probability of such a union can only get bigger if we take it over \textit{all} events of this type. Using~\eqref{eq:prob-alpha-beta}, it immediately follows that $i^ {\sx+}_{\cap}(t:\alpha) \le i^{\sx+}_{\cap}(t:\beta)$ and $i^ {\sx+}_{\cap}$ is monotonically increasing. Using the same argument, $i^ {\sx-}_{\cap}$ is monotonically increasing.
    \end{proof}
    \begin{proposition}
        $i^ {\sx\pm}_{\cap}$ are nonnegative.
    \end{proposition}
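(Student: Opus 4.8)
The plan is to read off nonnegativity directly from the closed forms of the two components established just above, namely $i^{\sx+}_\cap(t:\alpha)=\log_2\frac{1}{\mP(\alpha)}$ and $i^{\sx-}_\cap(t:\alpha)=\log_2\frac{\mP(\mft)}{\mP(\mft\cap\alpha)}$, where I identify the antichain $\alpha$ with the event $\bigcup_{\bfa\in\alpha}\bigcap_{i\in\bfa}\mfs_i$. In each case the argument of the logarithm is a ratio of probabilities that I will bound below by $1$; since $\log_2 x\ge 0$ precisely when $x\ge 1$, nonnegativity follows immediately. No convexity machinery is needed here: the earlier Theorem~\ref{thm:1st-ord-convex} and Proposition~\ref{prop:conv-dec-gen} are tools reserved for the genuinely harder nonnegativity of the atoms $\pi^{\sx}_\pm$ in Theorem~\ref{thm:non-neg}, not for this elementary bound on the redundancy measures themselves.

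For the informative component I would simply note that $\mP(\alpha)$ is the probability of an event and hence lies in $[0,1]$, so that $\frac{1}{\mP(\alpha)}\ge 1$ and therefore $i^{\sx+}_\cap(t:\alpha)=\log_2\frac{1}{\mP(\alpha)}\ge 0$. On the interior of $\Delta_P$ one has $\mP(\alpha)>0$, so the expression is finite and well-defined.

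For the misinformative component the key observation is that the joint event appearing in the denominator is contained in $\mft$. Starting from the definition $\mP(\mft,\alpha)=\mP\big(\bigcup_{\bfa\in\alpha}\bigcap_{i\in\bfa}(\mft\cap\mfs_i)\big)$, I would distribute the common factor $\mft$ out of every intersection and union, obtaining $\bigcup_{\bfa\in\alpha}\bigcap_{i\in\bfa}(\mft\cap\mfs_i)=\mft\cap\big(\bigcup_{\bfa\in\alpha}\bigcap_{i\in\bfa}\mfs_i\big)=\mft\cap\alpha$. Since $\mft\cap\alpha\subseteq\mft$, monotonicity of the measure $\mP$ gives $\mP(\mft\cap\alpha)\le\mP(\mft)$, whence $\frac{\mP(\mft)}{\mP(\mft\cap\alpha)}\ge 1$ and $i^{\sx-}_\cap(t:\alpha)=\log_2\frac{\mP(\mft)}{\mP(\mft\cap\alpha)}\ge 0$.

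The only point requiring care — and the single place I would be explicit — is the distributivity step rewriting $\mP(\mft,\alpha)$ as $\mP(\mft\cap\alpha)$, together with keeping the denominators strictly positive so the ratios are genuinely defined; both are guaranteed on the interior of $\Delta_P$. Since the argument uses nothing specific to a particular $\alpha$, it holds for every $\alpha\in\mathscr A([n])$, which completes the proof.
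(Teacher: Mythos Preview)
Your proposal is correct and follows essentially the same approach as the paper: both arguments simply observe that the ratios inside the logarithms are at least $1$ because $\mP(\alpha)\le 1$ and $\mP(\mft\cap\alpha)\le\mP(\mft)$, so the logs are nonnegative. Your version is slightly more explicit about the distributivity step and well-definedness on the interior of $\Delta_P$, but the underlying idea is identical.
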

    \begin{proof}
        $i^ {\sx+}_{\cap}(t:\bfa_1; \bfa_2; \ldots; \bfa_m)= \log_2\frac{1}{\mP(\mfa_1\cup\mfa_2\cup\ldots\cup\bfa_m)}\ge 0.$ 
        
        \noindent Similarly, the misinformative
        $i^{\sx-}_{\cap}(t:\bfa_1; \bfa_2;\ldots; \bfa_m) = \log_2\frac{\mP(\mft)}{\mP(\mft\cap[(\cap_{i\in\bfa_1}\mfs_i)\cup(\cap_{i\in\bfa_2}\mfs_i)\cup\ldots\cup(\cap_{i\in\bfa_m}\mfs_i)])}\ge 0.$
    \end{proof}
    We construct a family of mappings from $\mathscr P(\alpha^-)$ where $\alpha^-$ is the set of children of $\alpha$ to the $\mathscr A([n])$ (see FIG~\ref{fig:Lattice-map}). This family of mappings plays a key role in the desired proof of nonnegativity.
    \begin{proposition}\label{prop:the-mapping}
        Let $\alpha\in\mathscr A([n])$ and $\alpha^- = \{\gamma_1, \dots, \gamma_k\}$ ordered increasingly w.r.t.~the probability mass be the set of children of $\alpha$ on $\left<\mathscr A([n]), \preceq\right>.$ Then, for any $1\le i\le k$
        \begin{alignat*}{2}
            f_i: &\mathscr P_1(\alpha^-\backslash\{\gamma_i\})\cup\{\{\alpha\}\} &\longrightarrow{}&\mathscr A([n])\\
            &\qquad\mathcal B                               &\longrightarrow{}&\qquad\bigwedge_{\beta\in\mathcal B}\beta\wedge\gamma_i
        \end{alignat*}
        is a mapping such that $\mP(f_i(\mathcal B)) = \mP(\bigwedge_{\beta\in\mathcal B}\beta) + d_i$ where $d_i = \mP(\gamma_i) - \mP(\alpha)$ and the complement is taken w.r.t.~$\mathscr P(\alpha^-),$ the powerset of $\alpha^-.$
    \end{proposition}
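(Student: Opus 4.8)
The plan is to translate the whole statement into the algebra of the underlying events and reduce it to a single inclusion--exclusion identity. For an antichain $\alpha\in\mathscr A([n])$ write $E_\alpha := \bigcup_{\bfa\in\alpha}\bigcap_{i\in\bfa}\mfs_i$ for its associated event, so that $\mP(\alpha)=\mP(E_\alpha)$, and abbreviate $\bigwedge\mathcal B := \bigwedge_{\beta\in\mathcal B}\beta$. First I would record the two structural identities $E_{\alpha\wedge\beta}=E_\alpha\cup E_\beta$ and $E_{\alpha\vee\beta}=E_\alpha\cap E_\beta$. The first holds because the meet $\alpha\wedge\beta$ is the set of inclusion-minimal elements of $\alpha\cup\beta$, and passing to minimal collections leaves the union $\bigcup_{\bfa}\bigcap_{i\in\bfa}\mfs_i$ unchanged (larger collections only contribute subsets). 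The second follows from $\alpha\vee\beta=\min\{\bfa\cup\bfb:\bfa\in\alpha,\bfb\in\beta\}$ together with distributing one union of intersections over the other. Iterating the first identity gives $E_{\bigwedge\mathcal B}=\bigcup_{\beta\in\mathcal B}E_\beta$. Well-definedness of $f_i$ is immediate, since the lattice is closed under meets, so $f_i(\mathcal B)=(\bigwedge\mathcal B)\wedge\gamma_i\in\mathscr A([n])$.

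Combining these, $\mP(f_i(\mathcal B))=\mP\big(E_{\bigwedge\mathcal B}\cup E_{\gamma_i}\big)$, and inclusion--exclusion turns this into $\mP(\bigwedge\mathcal B)+\mP(\gamma_i)-\mP\big(E_{\bigwedge\mathcal B}\cap E_{\gamma_i}\big)$. Since $d_i=\mP(\gamma_i)-\mP(\alpha)$, the claimed formula is therefore \emph{equivalent} to the single event identity $E_{\bigwedge\mathcal B}\cap E_{\gamma_i}=E_\alpha$, which is what I would establish next.

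The crux -- and the step I expect to be the real work -- is a combinatorial fact about children: any two distinct children $\gamma_i,\gamma_j$ of $\alpha$ satisfy $\gamma_i\vee\gamma_j=\alpha$. Indeed, the children are the lower covers of $\alpha$, hence the maximal elements of $\{\beta:\beta\prec\alpha\}$, so distinct children are incomparable; thus $\gamma_i\prec\gamma_i\vee\gamma_j\preceq\alpha$, and because $\alpha$ covers $\gamma_i$ there is nothing strictly in between, forcing $\gamma_i\vee\gamma_j=\alpha$. The join identity then gives $E_{\gamma_j}\cap E_{\gamma_i}=E_{\gamma_i\vee\gamma_j}=E_\alpha$ for every $\gamma_j\in\mathcal B\subseteq\alpha^-\setminus\{\gamma_i\}$, whence
\[
E_{\bigwedge\mathcal B}\cap E_{\gamma_i}=\Big(\bigcup_{\gamma_j\in\mathcal B}E_{\gamma_j}\Big)\cap E_{\gamma_i}=\bigcup_{\gamma_j\in\mathcal B}\big(E_{\gamma_j}\cap E_{\gamma_i}\big)=E_\alpha,
\]
where nonemptiness of $\mathcal B$ is used (this is exactly why the domain is $\mathscr P_1$, the nonempty subsets). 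For the extra point $\mathcal B=\{\alpha\}$ the identity is even simpler: $\gamma_i\preceq\alpha$ gives $E_\alpha\subseteq E_{\gamma_i}$, so $E_{\bigwedge\{\alpha\}}\cap E_{\gamma_i}=E_\alpha\cap E_{\gamma_i}=E_\alpha$, matching $\mP(\bigwedge\{\alpha\})+d_i=\mP(\alpha)+d_i=\mP(\gamma_i)$.

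Substituting $\mP\big(E_{\bigwedge\mathcal B}\cap E_{\gamma_i}\big)=\mP(\alpha)$ back into the inclusion--exclusion expression yields $\mP(f_i(\mathcal B))=\mP(\bigwedge\mathcal B)+\mP(\gamma_i)-\mP(\alpha)=\mP(\bigwedge\mathcal B)+d_i$, as required; note that the ordering of the children by probability mass is irrelevant here and is only needed later for~\eqref{eq:diff_atom}. The main obstacle is conceptual rather than computational: recognizing that $\alpha\mapsto E_\alpha$ is an order-reversing lattice homomorphism into the event algebra (meet $\mapsto$ union, join $\mapsto$ intersection), and that the defining property of children -- being the maximal elements below $\alpha$ -- is precisely what collapses every pairwise join of distinct children onto $\alpha$, turning a potentially unwieldy intersection into the single event $E_\alpha$.
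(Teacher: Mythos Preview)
Your proof is correct and follows essentially the same route as the paper: both reduce the claim to inclusion--exclusion together with the identity $\mP\big((\bigwedge\mathcal B)\vee\gamma_i\big)=\mP(\alpha)$, which the paper simply asserts while you justify it via the order-reversing homomorphism $\alpha\mapsto E_\alpha$ and the covering property of children. Your version is more explicit about why pairwise joins of distinct children collapse to $\alpha$ (working at the event level with distributivity of $\cup,\cap$ rather than invoking distributivity of the redundancy lattice), a step the paper leaves to the reader.
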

    \begin{proof}
          Since $\gamma_i\in\alpha^-$ and  $\beta\in\alpha^- $ for any $\beta\in\mathcal B,$ then $(\bigwedge_{\beta\in\mathcal B}\beta)\vee\gamma_i = \alpha$. Now, for any $\mathcal B\in\mathscr P(\alpha^-\backslash\{\gamma_i\}),$ using the inclusion-exclusion, $\beta\wedge\gamma_i = \underline{\beta\cup\gamma_i}$ and $\beta\vee\gamma_i = \underline{\uparrow\beta\cap\uparrow\gamma_i},$
        \[\small
            \begin{split}
                \mP(f_i(\mathcal B)) = \mP(\bigwedge_{\beta\in\mathcal B}\beta\wedge\gamma_i) &= \mP(\bigwedge_{\beta\in\mathcal B}\beta) + \mP(\gamma_i) - \mP(\bigwedge_{\beta\in\mathcal B}\beta\vee\gamma_i)\\ 
                &= \mP(\beta) + \mP(\gamma_i) - \mP(\alpha).
            \end{split}
        \]
    \end{proof}
    \begin{figure}
        \centering
        \includegraphics[width=0.3\textwidth]{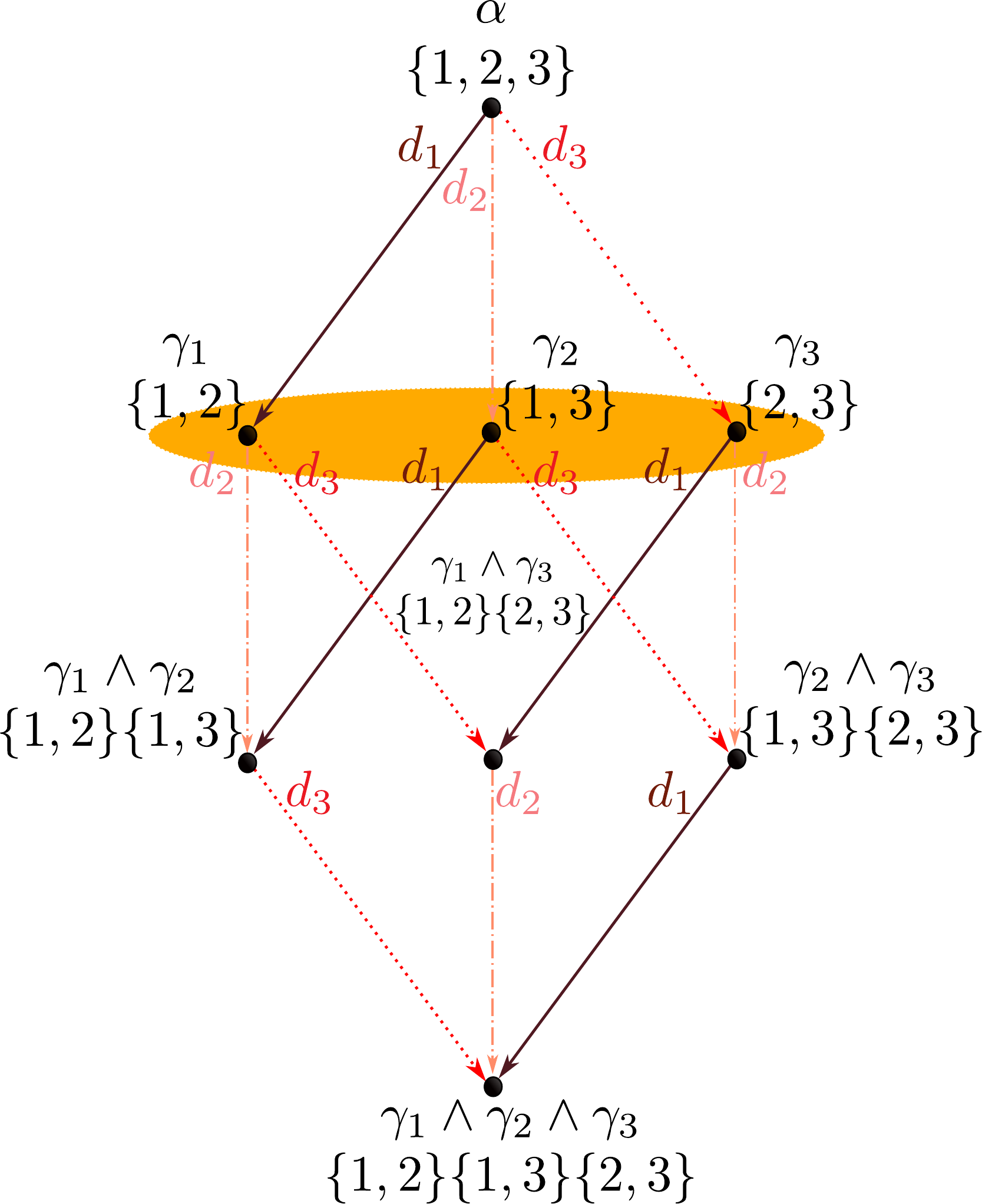}
        \caption{\textbf{The family of mappings introduced in proposition~\ref{prop:the-mapping} that preserve the probability mass difference.} Let $\alpha$ be the top node of $\mathscr A([3]).$ The orange (gray dotted) region is $\alpha^-,$ the set of children of $\alpha$. Each color depicts one mapping in the family based on some $\gamma\in\alpha^-$. The dark red (solid line) mapping is based on $\gamma_1,$ the red mapping (dash-dotted line) is based on $\gamma_2$ and the salmon (dotted line) mapping is based on $\gamma_3.$}
        \label{fig:Lattice-map}
    \end{figure}
    
   \begin{figure}
        \centering
        \includegraphics[width=0.3\textwidth]{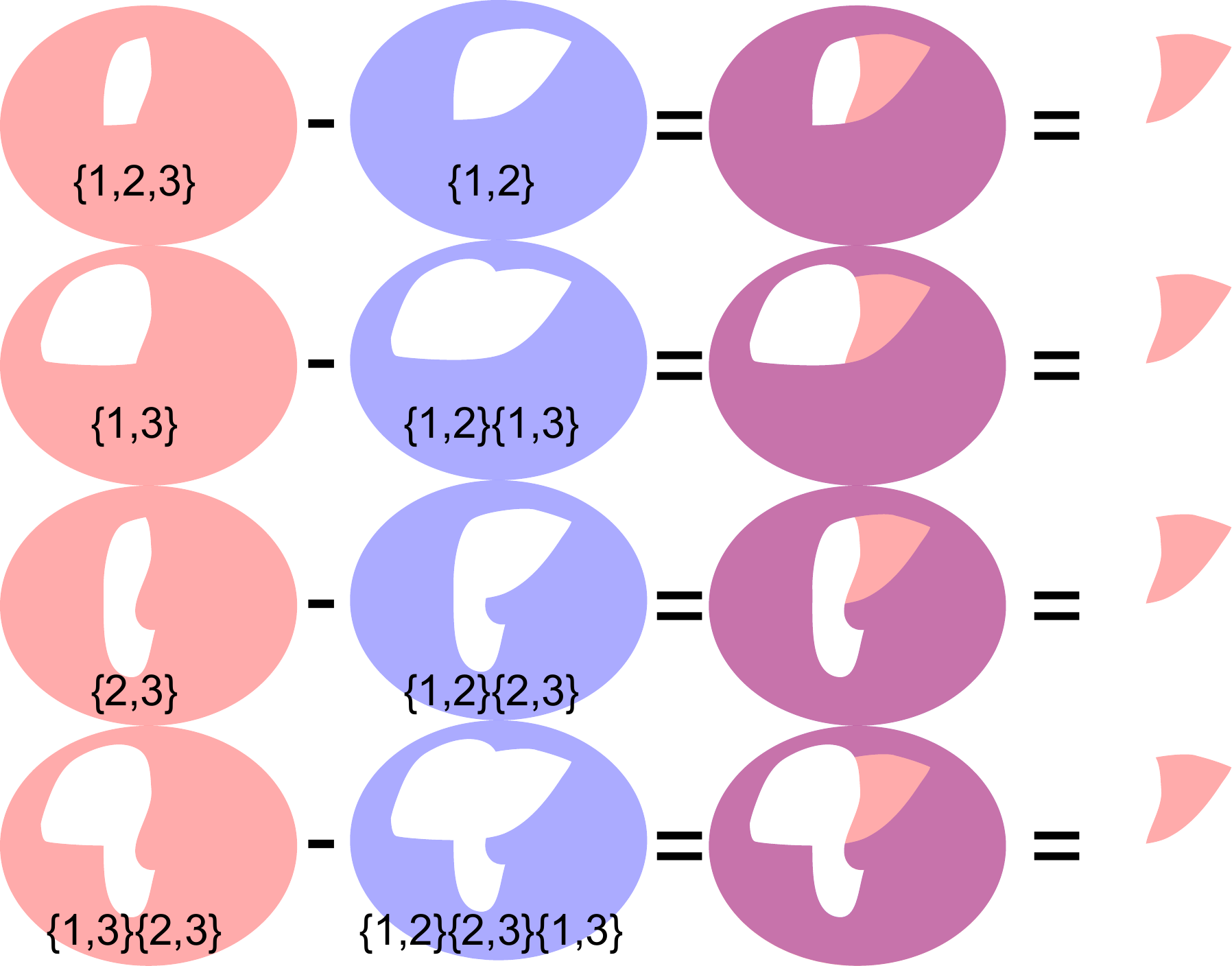}
        \caption{\textbf{Depiction of set differences corresponding to the probability mass difference $d_1$ introduced in proposition~\ref{prop:the-mapping} and shown in Fig.~\ref{fig:Lattice-map}, for the sets from Fig.~\ref{fig:Exclusion_lattice}.}
        \label{fig:d1}}
    \end{figure}
    The following lemma shows that for any node $\alpha\in\mathscr A([n])$, the recursive Eq.~\eqref{eq:moeb-inv} should be nonnegative which is the main point in the desired proof of nonnegativity.
    \begin{lemma}\label{lem:log-ineq}
        Let $\alpha\in \mathscr A([n]);$ then
        \begin{equation}\label{eq:log-ineq}
            -\log_2 \mP(\alpha) + \sum_{\emptyset\neq\mathcal B\subseteq\alpha^-}(-1)^{|\mathcal B| - 1}\log_2\mP(\bigwedge\mathcal B) \ge 0.
        \end{equation}
    \end{lemma}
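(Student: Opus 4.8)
The plan is to convert the statement into a purely scalar inequality about the child probabilities, and then to establish that inequality by an induction on the number of children driven by the monotonicity/convexity mechanism of Proposition~\ref{prop:conv-dec-gen}.

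First I would extract the combinatorial structure already implicit in Proposition~\ref{prop:the-mapping}. Write $p:=\mP(\alpha)$ and $\alpha^-=\{\gamma_1,\dots,\gamma_k\}$ with $d_i:=\mP(\gamma_i)-\mP(\alpha)\ge 0$. The key fact is that the events attached to distinct children differ from the event attached to $\alpha$ by \emph{pairwise disjoint} pieces: any two distinct children are lower covers of $\alpha$, hence incomparable with $\gamma_i\vee\gamma_j=\alpha$, which under the dictionary used in Proposition~\ref{prop:the-mapping} (join $\leftrightarrow$ intersection of events, meet $\leftrightarrow$ union of events) means the ``excess'' sets of the children beyond $\alpha$ are mutually disjoint. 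Consequently $\mP(\bigwedge\mathcal B)=p+\sum_{\gamma_i\in\mathcal B}d_i$ for every $\mathcal B\subseteq\alpha^-$, which is exactly the mass-preserving identity of Proposition~\ref{prop:the-mapping} read off all subsets simultaneously. Substituting this into~\eqref{eq:log-ineq}, the left-hand side becomes
\[
 M_k(p):=-\log_2 p+\sum_{\emptyset\ne\mathcal B\subseteq[k]}(-1)^{|\mathcal B|-1}\log_2\Big(p+\sum_{i\in\mathcal B}d_i\Big),
\]
and the problem is reduced to showing $M_k(p)\ge 0$ for $p>0$ and $d_i\ge 0$, entirely decoupled from the lattice.

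Next I would induct on $k$ using the recursion obtained by splitting the subsets $\mathcal B$ according to whether they contain the last index,
\[
 M_k(p)=M_{k-1}(p)-M_{k-1}(p+d_k),
\]
where $M_{k-1}$ is the same expression built from $d_1,\dots,d_{k-1}$. The base case $M_1(p)=\log_2\frac{p+d_1}{p}\ge 0$ is immediate. This is where Proposition~\ref{prop:conv-dec-gen} enters: if $M_{k-1}$ is nonincreasing in $p$ the recursion gives $M_k(p)\ge 0$, and if $M_{k-1}$ is in addition convex then $M_k$ is itself nonincreasing, so the two properties propagate together along the $\mathbf 1$-direction comparison that the proposition formalizes. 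To make the step airtight I would carry the strengthened invariant that each $M_{k-1}$ has derivatives of alternating sign, $(-1)^{j}M_{k-1}^{(j)}(p)\ge 0$ for all $j\ge 0$ and $p>0$; differentiating the recursion yields $M_k^{(j)}(p)=-\int_0^{d_k}M_{k-1}^{(j+1)}(p+s)\,ds$, so the sign of $M_k^{(j)}$ is the opposite of that of $M_{k-1}^{(j+1)}$, which reproduces the invariant at level $k$ and, for $j=0$, the desired $M_k\ge0$. The invariant holds at the base because the pattern is inherited from $\frac{d^j}{dx^j}\log_2 x=\tfrac{(-1)^{j-1}(j-1)!}{x^j\ln 2}$.

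The main obstacle I anticipate is not the analytic induction but making the first reduction fully rigorous: one must verify that the increments $d_i$ are genuinely additive over \emph{arbitrary} subsets of children (full mutual disjointness of the excess events), which rests on the lower-cover identity $\gamma_i\vee\gamma_j=\alpha$ together with the meet/join-to-union/intersection correspondence underlying Proposition~\ref{prop:the-mapping}. A secondary point is choosing an inductive hypothesis strong enough to propagate: establishing only $M_{k-1}\ge 0$ does not close the induction, whereas the alternating-derivative invariant (equivalently, recognizing the alternating sum as a mixed finite difference of $\log_2$, whose top-order derivative has fixed sign) does. Finally, I would note the constraint $p=\mP(\alpha)\le 1$ is needed only for the harmless observation that the expression is nonnegative at the physical point, since the monotonicity and convexity facts driving the induction hold on all of $(0,\infty)$.
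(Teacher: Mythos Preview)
Your proposal is correct, and the overall strategy---reduce via Proposition~\ref{prop:the-mapping} to a scalar inequality in the increments $d_i=\mP(\gamma_i)-\mP(\alpha)$, then induct on $k=|\alpha^-|$---coincides with the paper's. Note that your additivity formula $\mP(\bigwedge\mathcal B)=p+\sum_{\gamma_i\in\mathcal B}d_i$ does not need a separate disjointness argument: it follows by straightforward iteration of Proposition~\ref{prop:the-mapping}, adding one child at a time, so the ``main obstacle'' you flag is already handled.

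Where you genuinely diverge is in the analytic step. The paper carries, at level $k$, a \emph{multivariate} function $h_{k}$ on an affine slice $H_{k}^{\delta}\subset\mathbb R^{2^{k-2}}$ and propagates the two properties ``decreasing along $\mathbf 1$'' and ``convex along $\mathbf 1$'' using Theorem~\ref{thm:1st-ord-convex} and Proposition~\ref{prop:conv-dec-gen} at each stage. You instead keep everything one-dimensional via the recursion $M_k(p)=M_{k-1}(p)-M_{k-1}(p+d_k)$ and strengthen the inductive hypothesis to full sign-alternation of derivatives, $(-1)^jM_{k-1}^{(j)}\ge 0$; differentiating the recursion (or writing $M_k^{(j)}(p)=-\int_0^{d_k}M_{k-1}^{(j+1)}(p+s)\,ds$) then propagates the invariant and yields $M_k\ge 0$ at $j=0$. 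This is the complete-monotonicity / iterated-finite-difference viewpoint: $M_k$ is a $k$-th forward difference of $-\log_2$, nonnegative because $(-1)^k\frac{d^k}{dx^k}\log_2 x$ has fixed sign. Your version is shorter and more transparent, and it sidesteps the need for Proposition~\ref{prop:conv-dec-gen} altogether; the paper's route, by contrast, makes the role of convexity explicit at each level but at the cost of tracking functions of exponentially growing dimension. You were right to note that the weaker pair ``nonincreasing $+$ convex'' does \emph{not} self-propagate (one would additionally need $M_{k-1}'''\le 0$), which is exactly why the paper escalates to the multivariate formulation while you escalate to the all-orders sign invariant.
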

    \begin{proof}
        Suppose that $|\alpha^-| = k$ and w.l.o.g.~that $\alpha^-=\{\gamma_1,\dots,\gamma_k\}$ is ordered increasingly w.r.t.~the probability mass. The proof will follow by induction over $k=|\alpha^-|$. We will demonstrate the inequality~\eqref{eq:log-ineq} for $k=3,4$ to show the induction basis. For $k=3,$ the L.H.S.~of~\eqref{eq:log-ineq} can be written as
        \[\small
        \begin{split}
            &\log_2\frac{\mP(\gamma_1)\mP(\gamma_2)\mP(\gamma_3)\mP(\gamma_1\wedge\gamma_2\wedge\gamma_3)}{\mP(\alpha)\mP(\gamma_1\wedge\gamma_2)\mP(\gamma_1\wedge\gamma_3)\mP(\gamma_2\wedge\gamma_3)}\\
            &= \log_2\frac{\frac{\mP(\alpha) + d_1}{\mP(\alpha)}}{\frac{(\mP(\alpha) + d_2) + d_1}{(\mP(\alpha) + d_2)}} - \log_2\frac{\frac{\mP(\alpha) + d_3 + d_1}{\mP(\alpha) + d_3}}{\frac{(\mP(\alpha) + d_3 + d_2) + d_1}{(\mP(\alpha) + d_3 + d_2)}}\\
            &= [h_3(\mP(\alpha)) - h_3(\mP(\alpha) + d_2)]\\
            &- [h_3(\mP(\alpha) + d_3) - h_3(\mP(\alpha) + d_3 + d_2)],
        \end{split}
        \]
        where $h_3(x)=\log_2(1 + \nicefrac{d_1}{x}),$ $d_i:=\mP(\gamma_i) - \mP(\alpha)$ for $i\in\{1,2,3\},$ and $d_3\ge d_2\ge d_1\ge 0.$ Note that $h_3$ is a continuously differentiable convex function that is monotonically decreasing. Now, take $x = \mP(\alpha)$ and $y=\mP(\alpha) + d_3,$ then
        \[
            \begin{split}
                &h_3(\mP(\alpha)) - h_3(\mP(\alpha) + d_2)\\
                &\stackrel{\text{Thm.}~\ref{thm:1st-ord-convex}}{\ge}-d_2h_3'(\mP(\alpha) + d_2)\\
                &\stackrel{\text{Prop.}~\ref{prop:conv-dec-gen}}{\ge} -d_2h_3'(\mP(\alpha) + d_3)\\
                &\stackrel{\text{Thm.}~\ref{thm:1st-ord-convex}}{\ge} h_3(\mP(\alpha) + d_3)-h_3(\mP(\alpha) + d_3 + d_2)
            \end{split}
        \]
        and so the inequality~\eqref{eq:log-ineq} holds when $k=3.$ For $k=4,$ we have $\alpha^-=\{\gamma_1,\gamma_2,\gamma_3,\gamma_4\}$ ordered increasingly w.r.t.~the probability mass. By Proposition~\ref{prop:the-mapping}, the L.H.S.~of~\eqref{eq:log-ineq} can be written as
        \[\small
        \begin{split}
            &\bigg[ h_3\big(\mP(\alpha)\big) - h_3\big(\mP(\alpha) + d_2\big) - \bigg( h_3(\mP(\alpha) + d_3)\\ 
            &- h_3\big(\mP(\alpha) + d_3 + d_2\big) \bigg) \bigg]\\ 
            &-\bigg[ h_3\big(\mP(\alpha) + d_4\big) - h_3\big(\mP(\alpha) + d_4 + d_2\big)\\ 
            &- \bigg( h_3\big(\mP(\alpha) + d_4 + d_3\big) - h_3\big(\mP(\alpha) +d_4 + d_3 + d_2\big) \bigg) \bigg]\\
            &= \bigg[ h_4\big(\mP(\alpha), \mP(\alpha) + d_2\big) - h_4\big(\mP(\alpha) + d_3, \mP(\alpha) + d_3 + d_2\big) \bigg]\\
            &- \bigg[ h_4\big(\mP(\alpha) + d_4, \mP(\alpha) + d_4 + d_2\big)\\ 
            &- h_4\big(\mP(\alpha) + d_4 + d_3, \mP(\alpha) + d_4 + d_3 + d_2\big) \bigg],
        \end{split}
        \]
        where $d_i:=\mP(\gamma_i) - \mP(\alpha)$ for $i\in\{2,3,4\},$ $d_4\ge d_3\ge d_2\ge 0,$ and $h_4(x_1,x_2) = \log_2(1 +\nicefrac{d_1(x_2 - x_1)}{x_1(x_2 + d_1)}) = h_3(x_1) - h_3(x_2).$ Let $\delta\ge 0$ and  $x,y\in H_4^\delta:=\{x\in\RR^{2*}_+\mid x_2 = x_1 + \delta\}$ where $x_1\le y_1,$ then $h_4(x)\ge h_4(y)$ since~\eqref{eq:log-ineq} holds for $k=3.$ Moreover, $h_4$ is convex since for any $x,y\in H_4^\delta$ and $\theta\in[0,1]$
        \[\small
            \begin{split}
                &\theta h_4(x) + (1-\theta)h_4(y) - h_4(\theta x + (1-\theta)y)\\
                &= \theta (h_3(x_1) - h_3(x_2)) + (1-\theta)(h_3(y_1)\\ 
                &- h_3(y_2)) - h_3(\theta x_1 + (1-\theta)y_1) + h_3(\theta x_2 + (1-\theta)y_2)\\
                &= [\theta h_3(x_1) + (1-\theta)h_3(y_1) -  h_3(\theta x_1 + (1-\theta)y_1)]\\
                &- [\theta h_3(x_1 + \delta) + (1-\theta)h_3(y_1 + \delta)\\ 
                &- h_3(\theta x_1 + (1-\theta)y_1 + \delta)]\ge 0.
            \end{split}
        \]
         Now, take $x =(\mP(\alpha),\mP(\alpha) + d_2)$ and $y=(\mP(\alpha) + d_4, \mP(\alpha) + d_4 + d_2),$ then 
        \[\small
            \begin{split}
                &h_4(\mP(\alpha),\mP(\alpha) + d_2) - h_3(p(\alpha) + d_3, \mP(\alpha) + d_3 + d_2)\\
                &\stackrel{\text{Thm.}~\ref{thm:1st-ord-convex}}{\ge} -\nabla^Th_4(\mP(\alpha) + d_3, \mP(\alpha) + d_3 + d_2)(d_3,d_3)\\
                &\stackrel{\text{Prop.}~\ref{prop:conv-dec-gen}}{\ge} -\nabla^Th_4(\mP(\alpha) + d_4, \mP(\alpha) + d_4 + d_2)(d_3,d_3)\\
                &\stackrel{\text{Thm.}~\ref{thm:1st-ord-convex}}{\ge} h_4(\mP(\alpha) + d_4, \mP(\alpha) + d_4 + d_2)\\ &
                - h_4(\mP(\alpha) + d_4 + d_3, \mP(\alpha) + d_4 + d_3 + d_2),       
            \end{split}
        \]
        and so the inequality~\eqref{eq:log-ineq} holds.  
        
        Suppose that the inequality holds for $k$ and let us proof it for $k+1$. Here $\alpha^- =\{\gamma_1,\gamma_2,\ldots,\gamma_{k+1}\}$ and using Proposition~\ref{prop:the-mapping}, the L.H.S.~of~\eqref{eq:log-ineq} can be written as
        \[\small
            \begin{split}
                &\bigg[h_k\big(a_{k-2}\big) - h_k\big(a_{k-2} + d_{k-1}\mathbf{1}_{k-2}\big) -\bigg(h_k\big(a_{k-2} + d_{k}\mathbf{1}_{k-2}\big)\\
                &- h_k\big(a_{k-2} + (d_k +d_{k-1})\mathbf{1}_{k-2}\big)\bigg)\bigg]\\
                &-\bigg[h_k\big(a_{k-2} + d_{k+1}\mathbf{1}_{k-2}\big) - h_k\big(a_{k-2} + (d_{k+1} + d_{k-1})\mathbf{1}_{k-2}\big)\\ 
                &-\bigg(h_k\big(a_{k-2} + (d_{k+1} + d_{k})\mathbf{1}_{k-2}\big)\\ 
                &- h_k\big(a_{k-2} + (d_{k+1} + d_k + d_{k-1})\mathbf{1}_{k-2}\big)\bigg)\bigg]\\
                &=\bigg[h_{k+1}\big(a_{k-2}, a_{k-2} + d_{k-1}\mathbf{1}_{k-2}\big)\\ 
                &-\bigg(h_{k+1}\big(a_{k-2} + d_{k}\mathbf{1}_{k-2}, a_{k-2} + (d_k +d_{k-1})\mathbf{1}_{k-2}\big)\bigg)\bigg]\\
                &-\bigg[h_{k+1}\big(a_{k-2} + d_{k+1}\mathbf{1}_{k-2}, a_{k-2} + (d_{k+1} + d_{k-1}\big)\mathbf{1}_{k-2})\\ 
                &-h_{k+1}\big(a_{k-2} + (d_{k+1} + d_{k})\mathbf{1}_{k-2}, a_{k-2} + (d_{k+1} + d_k\\
                &+ d_{k-1})\mathbf{1}_{k-2}\big)\bigg]
            \end{split}
        \]
        where $a_{k-2} := (\mP(\alpha),\dots,\mP(\alpha) + \sum_{i=2}^{k-2}d_i)\in\RR^{2^{k-2}},$ $d_i:=\mP(\gamma_i) - \mP(\alpha)$ for $i\in\{2,\dots,k+1\},$ $d_{k+1}\ge\dots\ge d_2\ge 0,$ and $h_{k+1}(x_1,\dots, x_{2^{k-1}}) = h_{k}(x_1,\dots, x_{2^{k-2}}) - h_k(x_{2^{k-2}+1},\dots, x_{2^{k-1}}).$ 
        
        Let $\delta \ge 0$ and $x,y\in H_{k+1}^\delta:=\{x\in\RR^{2^{k-1}}\mid x_i = x_{j} + \delta,i = j \Mod~2^{k-2}\}$ where $x_i\le y_i$ for all $i,$ then $h_{k+1}(x)\ge h(y)$ because the Ineq.~\eqref{eq:log-ineq} holds for $k.$ Moreover, $h_{k+1}$ is convex since for any $x,y\in H_{k+1}^\delta$ and $\theta\in[0,1]$
        \[\small
            \begin{split}
                &\theta h_{k+1}(x_1,\dots, x_{2^{k-1}}) + (1-\theta) h_{k+1}(y_1,\dots, y_{2^{k-1}})\\ 
                &- h_{k+1}(\theta x_1 +(1-\theta)y_1 ,\dots, \theta x_{2^{k-1}} + (1-\theta)y_{2^{k-1}})\\
                &= \bigg[\theta h_k(x_1,\dots, x_{2^{k-2}}) + (1-\theta)h_k(y_1,\dots, y_{2^{k-2}})\\ 
                &- h_k(\theta x_1 +(1-\theta)y_1 ,\dots, \theta x_{2^{k-1}} + (1-\theta)y_{2^{k-2}})\bigg] - \bigg[\\
                &\theta h_k(x_1 +\delta,\dots, x_{2^{k-2}} +\delta) + (1-\theta)h_k(y_1 + \delta,\dots, y_{2^{k-2}} + \delta)\\ 
                &- h_k(\theta x_1 + (1-\theta)y_1 + \delta ,\dots, \theta x_{2^{k-2}} + (1-\theta)y_{2^{k-2}} + \delta)\bigg].
            \end{split}
        \]
        is nonnegative. Now, take $x =(a_{k-2},a_{k-2} + d_{k-1}\mathbf{1}_{k-2})$ and $y=(a_{k-2} + d_{k+1}\mathbf{1}_{k-2}, a_{k-2} + (d_{k+1} + d_{k-1})\mathbf{1}_{k-2}),$ then 
        \[\small
            \begin{split}
                &h_{k+1}(a_{k-2}, a_{k-2} + d_{k-1}\mathbf{1}_{k-2}) -h_{k+1}(a_{k-2} + d_{k}\mathbf{1}_{k-2},\\ 
                &a_{k-2} + (d_k +d_{k-1})\mathbf{1}_{k-2})\\
                &\ge -d_{k}\nabla^Th_{k+1}(a_{k-2} + d_{k}\mathbf{1}_{k-2}, a_{k-2} + (d_k +d_{k-1})\mathbf{1}_{k-2})\mathbf{1}_{k-1}\\
                &\ge -d_{k}\nabla^Th_{k+1}(a_{k-2} + d_{k+1}\mathbf{1}_{k-2},\\ 
                &a_{k-2} + (d_{k+1} +d_{k-1})\mathbf{1}_{k-2})\mathbf{1}_{k-1}\\
                &\ge h_{k+1}(a_{k-2} + d_{k+1}\mathbf{1}_{k-2}, a_{k-2} + (d_{k+1} + d_{k-1})\mathbf{1}_{k-2})\\
                &-h_{k+1}(a_{k-2} + (d_{k+1} + d_{k})\mathbf{1}_{k-2},\\ 
                &a_{k-2} + (d_{k+1} + d_k + d_{k-1})\mathbf{1}_{k-2}),
            \end{split}
        \]
        where the first and third inequalities hold using theorem~\ref{thm:1st-ord-convex} and the second inequality holds using Proposition~\ref{prop:conv-dec-gen} and so the inequality~\eqref{eq:log-ineq} holds for $k+1.$
    \end{proof}
    Finally we write down the proof of theorem~\ref{thm:non-neg} to conclude that  $i_\cap^{\sx}$ yields meaningful PPID terms.
    \begin{proof}[proof of theorem~\ref{thm:non-neg}]
        For any $\alpha\in\mathscr A([n]),$
        \[\small
            \begin{split}
                \pi_{+}^{\sx}(t:\alpha)   &= i^{\sx+}_{\cap}(t:\alpha) - \sum_{\emptyset\neq\mathcal B\subseteq\alpha^-}(-1)^{|\mathcal B|-1}i^ {\sx+}_{\cap}(t:\bigwedge\mathcal B)\\
                                    &= -\log_2 \mP(\alpha) + \sum_{\emptyset\neq\mathcal B\subseteq\alpha^-}(-1)^{|\mathcal B| - 1}\log_2 \mP(\bigwedge\mathcal B).
            \end{split}
        \]
        So, by Lemma~\ref{lem:log-ineq} $\pi^{\sx}_+(t:\alpha)\ge 0.$ Similarly, $\pi^{\sx}_-(t:\alpha)\ge 0$ since intersecting with $t$ has no effect on the nonnegativity shown in Lemma~\ref{lem:log-ineq}.
    \end{proof}

\section[]{\label{apx:sec:axiomatic_probability_theory}Definition of $i_\cap^{\sx}$ starting from a general probability space}
Let $(\Omega, \mathfrak{A}, \mathbb{P})$ be a probability space and $S_1, ..., S_n, T$ be discrete and finite random variables on that space, i.e., 
\begin{align*}
    \small
    &S_i: \Omega \rightarrow \mathcal A_{S_i}, \hspace{0.3cm }(\mathfrak{A},\mathscr{P}(\mathcal A_{S_i}))-\text{measurable} \\
    &T: \Omega \rightarrow \mathcal A_{T}, \hspace{0.3cm }(\mathfrak{A},\mathscr{P}(\mathcal A_{T}))-\text{measurable},
\end{align*}
where $\mathcal{A}_{S_i}$ and $\mathcal{A}_T$ are the finite alphabets of the corresponding random variables and $\mathscr{P}(\mathcal A_{S_i})$ and $\mathscr{P}(\mathcal A_{T}) $ are the power sets of these alphabets. Given a subset of source realization indices $\mathbf{a} \subseteq \{1,...,n\}$ the \emph{local mutual information} of source realizations $(s_i)_{i \in \mathbf{a}}$ about the target realization $t$ is defined as
\begin{equation*}
    \small
    i(t:(s_i)_{i \in \mathbf{a}})  = i(t:\mathbf{a}) = \log_2\frac{\mathbb{P}\left(\mathfrak{t}|\bigcap_{i \in \mathbf{a}} \mathfrak{s}_i \right)}{\mathbb{P}(\mathfrak{t})}.
\end{equation*}

The \emph{local shared information} of an antichain $\alpha = \{\mathbf{a}_1 , \dots, \mathbf{a}_m\}$ (representing a set of collections of source realizations) about the target realization $t \in \mathcal{A}_T$ is  defined in terms of the original probability measure $\mathbb{P}$ as a function $i_\cap^{\sx}: \mathcal{A}_T \times \mathscr{A}(s) \rightarrow \mathbb{R}$ with
\begin{equation*}
    \small
    i_\cap^{\sx}(t:\alpha) = i_\cap^{\sx} (t:\mathbf{a}_1; \dots; \mathbf{a}_m) := \log_2\frac{\mathbb{P}\left(\mathfrak{t}|\bigcup_{i=1}^m \mathfrak{a}_i \right)}{\mathbb{P}(\mathfrak{t})}.
\end{equation*}
A special case of this quantity is the local shared information of a \emph{complete} sequence of source realizations $(s_1, \dots, s_n)$ about the target realization $t$. This is obtained by setting $\mathbf{a}_i = \{i\}$ and $m=n$:
\begin{equation*}
    \small
    i_\cap^{\sx} (t:\{1\};\dots;\{n\}) = \log_2\frac{\mathbb{P}\left(\mathfrak{t}|\bigcup_{i=1}^n \mathfrak{s}_i\right)}{\mathbb{P}(\mathfrak{t})}.
\end{equation*}
In contrast to other shared information terms, this is an \emph{atomic} quantity corresponding to the very bottom of the lattice of antichains. Rewriting $i_\cap^{\sx}$ allows us to decompose it into the difference of two positive parts:
\begin{equation*}
    \small
    \begin{split}
        i_\cap^{\sx} (t:\mathbf{a}_1, ..., \mathbf{a}_m)    &= \log_2\frac{\mathbb{P}\left(\mathfrak{t} \cap \bigcup_{i=1}^m \mathfrak{a}_i \right)}{\mathbb{P}(\mathfrak{t}) \mathbb{P}\left(\bigcup_{i=1}^m \mathfrak{a}_i \right)} = \log_2\frac{1}{ \mathbb{P}\left(\bigcup_{i=1}^m \mathfrak{a}_i \right)}\\ 
        &-  \log_2\frac{\mathbb{P}(\mathfrak{t})}{\mathbb{P}\left(\mathfrak{t} \cap \bigcup_{i=1}^m \mathfrak{a}_i \right)},
    \end{split}
\end{equation*}
using standard rules for the logarithm. We call 
\begin{equation*}
    \small
    i_\cap^{\sx+}(t : \mathbf{a}_1, \dots, \mathbf{a}_m) := \log_2 \frac{1}{ \mathbb{P}\left(\bigcup_{i=1}^m \mathfrak{a}_i \right)}
\end{equation*}
the \emph{informative} local shared information and
\begin{equation*}
    \small
    i_\cap^{\sx-}(t : \mathbf{a}_1, \dots, \mathbf{a}_m) := \log_2\frac{\mathbb{P}(\mathfrak{t})}{\mathbb{P}\left(\mathfrak{t} \cap \bigcup_{i=1}^m \mathfrak{a}_i \right)}
\end{equation*}
the \emph{misinformative} local shared information.

\begin{acknowledgments}
We would like to thank Nils Bertschinger, Joe Lizier, Conor Finn and Robin Ince for fruitful discussions on PID.  We would also like to thank Patricia Wollstadt, Viola Priesemann, Raul Vicente, Johannes Zierenberg, Lucas Rudelt and Fabian Mikulasch for their valuable comments on this paper.

MW received support from SFB Project No.~1193, Subproject No.~C04 funded by the Deutsche Forschungsgemeinschaft. MW, AM, and AG are employed at the Campus Institute for Dynamics of Biological Networks (CIDBN) funded by the Volkswagen Stiftung. MW and AM received support from the Volkswagenstiftung under the program ``Big Data in den Lebenswissenschaften''. This work was supported by a funding from the Ministry for Science and Education of Lower Saxony and the Volkswagen Foundation through the ``Nieders\"{a}chsisches Vorab.'' MW is grateful to Jürgen Jost for hosting him at his department at the Max Planck Institute for Mathematics in the Sciences in Leipzig for a research stay funded by the Max Planck Society. 
\end{acknowledgments}

\providecommand{\noopsort}[1]{}\providecommand{\singleletter}[1]{#1}%

\end{document}